\DeclareMathSymbol{\shortminus}{\mathbin}{AMSa}{"39}
\newtheorem{theorem}{\textbf{Theorem}}
\newtheorem{lemma}[theorem]{\textbf{Lemma}}
\newtheorem{corollary}[theorem]{\textbf{Corollary}}
\theoremstyle{definition}
\newtheorem{definition}{\textbf{Definition}}
\newtheorem{example}{\textbf{Example}}
\tikzset{     
  main node/.style={circle,draw,minimum size=0.75cm,inner sep=0}, 
  color1/.style={preaction={fill, red!15}}, 
  color2/.style={preaction={fill, blue!15}, pattern=north east lines, pattern color=gray}, 
  color3/.style={preaction={fill, green!15}, pattern=dots, pattern color=gray!60}, 
  color4/.style={preaction={fill, yellow!25}, pattern=crosshatch, pattern color=gray!60}, 
  color5/.style={preaction={fill, white}, pattern=grid, pattern color=gray!60}, 
}
\title{Complexity of Computing the Shapley Value in Games with Externalities}
\author{Oskar Skibski \\
\normalsize University of Warsaw, Poland}
\begin{document}

\maketitle

\begin{abstract}
\noindent 
We study the complexity of computing the Shapley value in partition function form games.
We focus on two representations based on marginal contribution nets (embedded MC-nets and weighted MC-nets) and five extensions of the Shapley value.
Our results show that while weighted MC-nets are more concise than embedded MC-nets, they have slightly worse computational properties when it comes to computing the Shapley value: two out of five extensions can be computed in polynomial time for embedded MC-nets and only one for weighted MC-nets. 
\end{abstract}

\section{Introduction}
Coalitional games are a standard model of cooperation in multi-agent systems~\cite{Chalkiadakis:etal:2011}.
In the classic form the profit of a coalition is assumed to be independent of the coalitions formed by the other players.
However, this simplifying assumption does not hold in many settings.
For example, if agents in a system have conflicted goals or limited resources, then coalitions naturally affect each other~\cite{Dunne:2005}.
There are also examples of externalities in economics, in oligopolistic markets in particular, where cooperation of some companies affect the profits of the competitors~\cite{Yi:2003}.
That is why, in the last decade \emph{coalitional games with externalities} have gained attention both in economic~\cite{Koczy:2018,Abe:Funaki:2017} and AI literature~\cite{Rahwan:etal:2012:anytime,Michalak:etal:2010:embedded}.

Externalities present new challenges both conceptually and computationally.
On the conceptual side, it is unclear how to extend most solution concepts to games with externalities.
In particular, there are several non-equivalent well-established methods of extending the Shapley value to games with externalities proposed by \citeA{PhamDo:Norde:2007} \emph{(EF-value)}, \citeA{McQuillin:2009} \emph{(MQ-value)}, \citeA{Hu:Yang:2010} \emph{(HY-value)}, \citeA{Feldman:1996} \emph{(SS-value)} and \citeA{Myerson:1977:pfg} \emph{(MY-value)}.
On the computational side, externalities highly increase the size of the game itself. 

To cope with the extensive space requirement of games with externalities, three different representations were proposed in the literature. 
The first two, called \emph{embedded MC-nets}~\cite{Michalak:etal:2010:embedded} and \emph{weighted MC-nets}~\cite{Michalak:etal:2010:weighted}, are extensions of the well-known logic-based representation: \emph{marginal contribution nets} \cite{Ieong:Shoham:2005}.
In marginal contribution nets, a game is represented as a set of rules of the form \emph{pattern $\rightarrow$ weight}. 
A coalition satisfies a rule if it fits the pattern.
Now, the value of a coalition is the sum of weights of rules it satisfies.
Embedded MC-nets and weighted MC-nets extend this formalism in a way that pattern apply not only to the coalition, but also to the partition it is embedded in.
For embedded MC-nets, \citeA{Michalak:etal:2010:embedded} proved that one extension of the Shapley value can be computed in polynomial time.
For weighted MC-nets, only partial results (polynomial results under restrictive additional assumptions) have been developed for three extensions (see \cref{section:relatedwork} for details).

More recently, \citeA{Skibski:etal:2020:pdt} proposed a new representation, named \emph{partition decision trees}.
This new representation inspired by the decision diagrams is less concise than embedded and weighted MC-nets, but as it turns out it has good computational properties.
Specifically, the authors showed that all five extensions of the Shapley value listed above can be computed in polynomial time under this representation.
So far, however, it was unknown whether the same result holds also for two other representations or not.

In this paper, we fill a gap in the literature by determining what is the complexity of computing all the five extensions of the Shapley value in games represented as embedded and weighted MC-nets.
Specifically, we show that only two out of five extensions can be computed in polynomial time for embedded MC-nets and only one can be computed in polynomial time for weighted MC-nets (unless P = NP). 
For all other values we show that computation is \#P-hard (see Table~\ref{table:summaryofresults}).

\begin{table}[t]
\centering
\begin{tabular}{l|c|c}
& Embedded MC-nets & Weighted MC-nets \\
\hline
MQ-value & in P ($^*$) & in P (Th.\ref{theorem:mq_weighted}) \\
EF-value & in P (Th.\ref{theorem:ef_embedded}) 		  	& \#P-hard (Th.\ref{theorem:ef_weighted}) \\
HY-value & \#P-hard (Th.\ref{theorem:hy_embedded}) 	& \#P-hard (Th.\ref{theorem:hy_embedded}) \\
SS-value & \#P-hard (Th.\ref{theorem:ss_embedded}) 	& \#P-hard (Th.\ref{theorem:ss_embedded}) \\
MY-value & \#P-hard (Th.\ref{theorem:my_embedded}) 	& \#P-hard (Th.\ref{theorem:my_embedded})
\end{tabular}
\caption{Summary of complexity results for computing extended Shapley value in games represented as embedded and weighted MC-nets. 
$(^*)$ Proved by \citeA{Michalak:etal:2010:embedded}.}	
\label{table:summaryofresults}
\end{table}

Our results are based on a common technique that maps embedded and weighted MC-nets into graphs. 
Specifically, first we define \emph{hybrid rules} which are weighted MC-nets rules similar in their construction to embedded MC-nets rules. 
Then, we show that every weighted MC-nets rule is equivalent to a (polynomial in size) set of hybrid rules and every embedded MC-nets rule is equivalent to some hybrid rule. 
This allows us to concentrate entirely on hybrid rules.

Furthermore, we show that every hybrid rule can be represented as a graph in which nodes are labeled with sets of players and edges indicates that some groups of players cannot be merged.
As a result, the game represented as a hybrid rule can be defined based on proper vertex colorings of the corresponding graph and every extended Shapley value is a weighted sum over all colorings in this graph.

Building upon these general results, for each value we separately analyze the resulting weighted sum.
In particular, we show the MQ-value is a weighted sum over 2-colorings, so it can be computed in polynomial time.
In turn, the EF-value is a sum over independent sets in a part of the graph and it is hard to compute unless the graph has a regular structure.
Interestingly, the SS-value under some assumptions is proved to be equal to the number of matchings in a bipartite graph, hence it is also \#P-hard to compute.

The remainder of the paper is structured as follows. 
\cref{section:relatedwork} discusses the related work.
\cref{section:preliminaries} introduces the necessary definitions and notation. 
\cref{section:from_mcnets_to_graphs} presents hybrid rules and a technique that allows us to interpret embedded and weighted MC-nets rules as a collection of graphs.
\cref{section:computing} builds upon the previous section and presents our main complexity results.
\cref{section:conclusions} concludes the paper and presents some potential future directions. 

\section{Related work}\label{section:relatedwork}

The topic of succinct representations of coalitional games has been actively studied in the last decades. 
An essential criterion in the evaluation of representations is that of efficiently computing solution concepts. 
Here, most research concentrates on the Shapley value and core-related questions: checking whether the core is empty or whether a given imputation is in the core.

There are several important representations for games without externalities. 
Arguably, \emph{weighted voting games} are the most important representation that enables to model simple 0-1 games.
In weighted voting games, a game is represented as a list of weights and a quota; the value of a coalition is $1$ if the total weight of players is equal or higher than the quota.
\citeA{Prasad:Kelly:1990} proved that checking whether the Shapley value is non-zero is NP-complete.
Later on, \citeA{Deng:Papadimitriou:1994} proved that computing the Shapley value is \#P-complete, but checking emptiness of the core is possible in polynomial time.
Since then, a lot of work has been devoted to the computational analysis of weighted voting games \cite{Matsui:Matsui:2000,Elkind:etal:2009:wvg,Zuckerman:etal:2012}.

\citeA{Deng:Papadimitriou:1994} proposed also a new representation, called \emph{induced subgraph games}.
Here, a game is represented as a graph in which nodes are players and the value of a coalition is the sum of weights of edges in the subgraph induced by the coalition. 
This representation allows to compute the Shapley value in polynomial time, but core-related questions (checking emptiness of the core, checking whether an imputation is in the core) are NP-complete.

Yet another representation only for superadditive games was proposed by \citeA{Conitzer:Sandholm:2006} under the name \emph{synergy coalition groups}. 
The conciseness comes from the fact that the values of only some coalitions are explicitly specified. 
Now, the value of a coalition is the maximal value that can be obtained from partitioning it into coalitions with specified values. 
The authors proved that checking whether an imputation is in the core can be done in polynomial time. However, computing the Shapley value or even getting the value of a coalition is computationally hard.

\emph{Marginal contribution nets}~\cite{Ieong:Shoham:2005}, extensions of which we focus on, is the first fully expressive representation.
The authors showed that for this representation the Shapley value can be computed in polynomial time, but core-related problems are computationally hard. 
This follows from the fact that marginal contribution nets without negative literals can be considered a generalization of induced subgraph games to hypergraphs.
Other complexity results for this and a richer version of the MC-nets representation were obtained by \citeA{Elkind:etal:2009:mcnets} and \citeA{Greco:etal:2011}.

Other representations for games without externalities include \emph{skill-based representations}~\cite{Ohta:etal:2009} and \emph{algebraic decision diagrams}~\cite{Aadithya:etal:2011,Ichimura:etal:2011}.

For games with externalities, however, only three representations were proposed.
The first two, \emph{embedded MC-nets}~\cite{Michalak:etal:2010:embedded} and \emph{weighted MC-nets}~\cite{Michalak:etal:2010:weighted}, are the topic of this paper.
Both representations extend the standard marginal contribution nets rules in a way that they not only specify conditions on the coalition in question, but also on the partition it is embedded in.
For embedded MC-nets, the authors proved that the MQ-value can be computed in polynomial time.
For weighted MC-nets, only partial results for the MQ-value, the EF-value and the MY-value were obtained.
Specifically, as we describe in the next section, a weighted MC-nets rule consists of several blocks, each containing one or more (standard) marginal contribution nets rules:
\[ (pattern^1_1 \rightarrow value^1_1) \dots (pattern^1_k \rightarrow value^1_k) \mid (pattern_1^2 \rightarrow value_1^2) \ldots \mid \dots. \]
Now, the authors designed polynomial algorithms that work under additional assumptions. 
For the MQ-value, it is assumed that there exist some patterns that do not apply to singleton coalitions (with more than one positive literal) and, if they are in the same block, then other rules in this block are pairwise compatible.
For the EF-value, the algorithms assumes that there is more than one block or one block with one rule.
For the MY-value, the algorithms assumes that there is only one block and all rules in it are compatible.
See \cite{Michalak:etal:2010:weighted} and \cite{Michalak:2016:erratum} for details.
As we show in our paper, without these restrictions, it is indeed possible to compute the MQ-value in polynomial time, but computing the EF-value and the MY-value is \#P-hard.

In the third representation, named \emph{partition decision trees}~\cite{Skibski:etal:2020:pdt}, one rule is a directed tree, where non-leaf nodes are labeled with players' names, leaf nodes are labeled with payoff vectors, and edges indicate membership of players in coalitions. 
In this way, paths of the partition decision trees are equivalent to simple weighted MC-nets rules (more precisely, to weighted MC-nets without negative literals in which each block contains exactly one rule).
The authors proved that for this representation the MQ-value, the EF-value, the HY-value, the SS-value and the MY-value can all be computed in polynomial time.
This representation, however, is much less concise than embedded and weighted MC-nets.

Our results make use of the graphs that are labeled with sets of players (we call them \emph{player-graphs}, see Section~\ref{section:player-graphs}). 
In particular, if each set contains exactly one player we obtain a graph in which nodes are players and edges represent restrictions in forming a coalition. 
That is why, our paper belongs to a line of work that concentrate on computing solution concepts, and the Shapley value in particular, for classes of games defined based on graphs.
These include matching games~\cite{Greco:etal:2020,Aziz:DeKeijzer:2011}, network flow games~\cite{Bachrach:Rosenschein:2009} and game-theoretic centralities~\cite{Michalak:etal:2013:efficient,Skibski:etal:2019:attachment}, among others.

Moreover, the idea of player-graphs is also similar to Myerson's graph-restricted games \cite{Myerson:1977} and even more to compatibility games \cite{See:etal:2014} or signed graph games \cite{Skibski:etal:2020:signed} as an edge represent that two players cannot be in the same coalition.
The main difference here is that the graph in this paper (along with the weight of a rule) is the source of the values in a game and not an additional structure that affects values of the existing game.
Another difference is the fact that these models are proposed for games without externalities, and we study games with externalities.

Last but not least, \citeA{Skibski:etal:2016:kcoalitional} proposed a model of games with externalities in which at most $k$ coalitions can form. 
They proposed a dedicated extension of the Shapley value and proved it is \#P-hard to compute if the game is represented as a single embedded MC-nets rule. 
As far as we know, this is the only paper---other than already mentioned papers that introduce embedded MC-nets, weighted MC-nets and partition decision trees---that analyzes the complexity of extended Shapley values.

\section{Preliminaries}\label{section:preliminaries}
In this section, we introduce basic notation and definitions.
In many definitions we will use Iverson brackets: $[\varphi] = 1$ if statement $\varphi$ is true, and $[\varphi]=0$, otherwise (e.g., $[2=1]=0$).

\subsection{Partition function form games}

Let $N = \{1,\dots,n\}$ be a set of $n$ players, which will be fixed throughout the paper.
A \emph{coalition} is any nonempty subset of $N$.
The set of all possible partition of $N$ is denoted by $\mathcal{P}$ and the set of all embedded coalitions, i.e., coalitions in partitions, by $EC$:
\[ EC = \{(S,P) : P \in \mathcal{P}, S \in P\}. \]

In this paper, by \emph{game} we mean a coalitional game with externalities in a partition function form: formally, for a fixed set of players, a \emph{game} is a function that assigns a real value to every embedded coalition: $g: EC \rightarrow \mathbb{R}$. 
We say that a game has no externalities if the value of every coalition does not depend on the partition, i.e., $g(S,P) = g(S,P')$ for every coalition $S \subseteq N$ and $(S,P), (S,P') \in EC$.
If game has no externalities, then there exists a function $\hat{g}: 2^N \rightarrow \mathbb{R}$ with $\hat{g}(\emptyset) = 0$ such that $g(S,P) = \hat{g}(S)$ for every $(S,P) \in EC$.
This function is called a \emph{characteristic function}.

\subsection{Extended Shapley values}
A \emph{value} of a player in a game is a real number that represents player's importance or expected outcome.
The \emph{Shapley value} \cite{Shapley:1953} is defined for games without externalities.
For a game without externalities defined through a characteristic function $\hat{g}$, the Shapley value of player $i \in N$ is defined as follows:
\[ SV_i(\hat{g}) = \sum_{S \subseteq N} \zeta_i(S) \cdot \hat{g}(S), \mbox{ \quad where \ } \zeta_i(S) = \begin{cases} 
\frac{(|S|-1)! (n-|S|)!}{n!} & \mbox{if } i \in S, \\ 
-\frac{|S|! (n-|S|-1)!}{n!}  & \mbox{otherwise.}
\end{cases}\]
The Shapley value can be interpreted in the following way.
Assume players leave the grand coalition $N$ one by one in a random order. 
As the player leaves, she receives the payoff equal to the marginal contribution to the group of players $S$ they left: $\hat{g}(S) - \hat{g}(S \setminus \{i\})$.
Now, the Shapley value is expected payoff over all orders.

\citeA{Shapley:1953} famously proved that the Shapley value is a unique value that satisfies four simple axioms: \emph{Efficiency}, \emph{Symmetry}, \emph{Additivity} and \emph{Null-Player}.
However, in games with externalities these classic axioms are too weak to guarantee uniqueness~\cite{Macho-Stadler:etal:2007}. 
Hence, numerous extensions of the Shapley value to games with externalities have been proposed (see \cite{Koczy:2018} for a recent overview). 
Most of them, and all that we will focus on, can be considered a weighted average of values of embedded coalitions:
\begin{equation}\label{eq:extended_sv_general_form}
\varphi_i(g) = \sum_{(S,P) \in EC} \omega_i(S,P) \cdot g(S,P)
\end{equation}
for some weights $\omega: EC \times N \rightarrow \mathbb{R}$.

There are six extended Shapley values that satisfies the standard translation of the Shapley's axioms.
In this paper, we will focus on five of them, leaving the problem open for the Bolger value~\cite{Bolger:1989}.
To date, no computational results for the Bolger value are known.
In particular, the Bolger value satisfies \cref{eq:extended_sv_general_form}, but even computing weights $\omega_i(S,P)$ seems to be computationally challenging.
It is also not known whether the Bolger value can be computed in polynomial time in games represented as partition decision trees~\cite{Skibski:etal:2020:pdt}.

Let us describe the remaining five extensions.
Recall \cref{eq:extended_sv_general_form}.
In order to give the same values as the (standard) Shapley value for games without externalities, for a fixed coalition $S$ weights of embedded coalitions $(S,P) \in EC$ should sum up to $\zeta_i(S)$:
$\sum_{P \ni S} \omega_i(S,P) = \zeta_i(S)$.
Hence, such extended Shapley values can be defined by specifying how weight $\zeta_i(S)$ is distributed among partitions that contain $S$.

In the simplest approach, the whole weight is assigned to one partition.
This is the case for the first two values:
\begin{description}
\item[McQuillin value (MQ-value)] \cite{McQuillin:2009} of player $i$ in game $g$ is defined as
\begin{equation}\label{eq:mq_value}
MQ_i(g) = \sum_{(S,P) \in EC} \zeta_i(S) [|P| \le 2] \cdot g(S,P).
\end{equation}
\item[Externality-free value (EF-value)]\cite{PhamDo:Norde:2007} of player $i$ in game $g$ is defined as
\begin{equation}\label{eq:ef_value}
EF_i(g) = \sum_{(S,P) \in EC} \zeta_i(S) [|P|-1 = n - |S|] \cdot g(S,P).
\end{equation}
\end{description}
The MQ-value looks only at the value of coalition $S$ in a partition in which all other players form one coalition.
In turn, the EF-value (independently proposed by \citeA{DeClippel:Serrano:2008}) looks only at the value of coalition $S$ in a partition in which all other players form singleton coalitions.
As a result, MQ-value and EF-value are equivalent to the Shapley value of a game without externalities defined for every coalition $S \subseteq N$ as $\hat{g}(S) = g(S, \{S,N\setminus S\})$ in the case of MQ-value and $\hat{g}(S) = g(S, \{S\} \cup \{\{i\} : i \in N \setminus S\})$ in the case of EF-value.

The MQ-value and the EF-value ignore values of most embedded coalitions.
In turn, the HY-value and the SS-value for every coalition take a weighted average over all partitions it is in; the difference is the HY-value assigns greater weights to partitions with more coalitions, while the SS-value---to partitions with larger coalitions:
\begin{description}
\item [Hu-Yang value (HY-value)] \cite{Hu:Yang:2010} of player $i$ in game $g$ is defined as 
\begin{equation}\label{eq:hy_value}
HY_i(g) = \sum_{(S,P) \in EC} \zeta_i(S) \frac{\psi(S,P)}{|P|} \cdot g(S,P),
\end{equation}
where $\psi(S,P)$ is the number of partitions of $N$ in which players $N \setminus S$ form partition $P \setminus \{S\}$: $\psi(S,P) = |\{P' \in \mathcal{P} : \{T \setminus S : T \in P'\} \setminus \{\emptyset\} = P \setminus \{S\}\}|$.

\item[Stochastic Shapley value (SS-value)] \cite{Feldman:1996} of player $i$ in game $g$ is defined as
\begin{equation}\label{eq:ss_value}
SS_i(g) = \sum_{(S,P) \in EC} \zeta_i(S) \frac{\prod_{T \in P \setminus \{S\}}(|T|-1)!}{(n-|S|)!} \cdot g(S,P).
\end{equation}
\end{description}

For the HY-value, the value $\psi(S,P)$ is equal to the number of possible partitions that can be obtained by inserting players $S$ into partition $P \setminus \{S\}$.
Hence, for a fixed $S$, $\psi(S,P)$ depends solely on the size of partition $P$ and is larger for larger $P$.

For the SS-value, first proposed by \citeA{Feldman:1996}, but better known from the work of \citeA{Macho-Stadler:etal:2007}, the fraction ${\prod_{T \in P \setminus \{S\}}(|T|-1)!}/{(n-|S|)!}$ is the probability that in a random permutation of players $N \setminus S$ each coalition from $P \setminus \{S\}$ will form a separate cycle in a cycle decomposition.
As a result, weights of small partitions are significantly larger than weights of large partitions (e.g., $g(S,\{S,N\setminus S\})$ is multiplied by $\zeta_i(S)/(n-|S|)$ and $g(S,\{S,\{\{j\} : j \in N \setminus S\}\})$ by $\zeta_i(S)/(n-|S|)!$).

Finally, \citeA{Myerson:1977:pfg} studied how the Null-player axiom can be strengthen in order to obtain unique characterization. 
His analysis based on inclusion-exclusion principle led to the following value:
\begin{description}
\item[Myerson value (MY-value)] \cite{Myerson:1977:pfg} of player $i$ in game $g$ is defined as
\begin{equation}\label{eq:my_value}
MY_i(g) = \sum_{(S,P) \in EC} (-1)^{|P|} \left( \left(\sum_{T \in P \setminus \{S\}, i \not \in T} \frac{(|P|-2)!}{(n-|T|)}\right) - \frac{(|P|-1)!}{n}\right) g(S,P).
\end{equation}
\end{description}
What is characteristic for the MY-value, the weight of an embedded coalition $(S,P)$ may be negative even if $i \in S$. 
This implies that the MY-value does not satisfy a basic monotonicity principle: increasing the value of $g(S,P)$ may decrease the value of player $i \in S$.

\begin{example}
Fix $N = \{1,\dots,6\}$, coalition $S = \{1,2\}$ and player $i = 1$. 
Consider an elementary game without externalities $e^S$ in which only coalition $S$ has non-zero value: $e^S(S) = 1$ and $e^S(T) = 0$ for $S \neq T$.
The Shapley value of player $i$ in this game equals $SV_i(e^S) = \zeta_i(S) = 1/30$.

Now, consider elementary games with externalities.
For the fixed coalition $S=\{1,2\}$, there are 15 embedded coalitions $(S,P) \in EC$.
For each, we consider an elementary game $e^{(S,P)}$ in which only $(S,P)$ has non-zero value: $e^{(S,P)}(S,P) = 1$ and $e^{(S,P)}(T,R) = 0$ for $(T,R) \neq (S,P)$.
Note that $\sum_{P \ni S} e^{(S,P)}$ is equivalent to $e^S$, hence for every considered extended Shapley value $\varphi$ we have $\varphi_i(\sum_{P \ni S} e^{(S,P)}) = \zeta_i(S) = 1/30$. 

Table~\ref{table:esv} presents how value $\zeta_i(S)$ is distributed among elementary games $e^{(S,P)}$ by extended Shapley values. 
Specifically, it contains values $MQ_i(g)$, $EF_i(g)$, $HY_i(g)$, $SS_i(g)$ and $MY_i(g)$ divided by $\zeta_i(S)$ for games $g \in \{e^{(S,P)} : (S,P) \in EC\}$.
%These values corresponds also to the weights of values $g(S,P)$ in the corresponding formulas of the extended Shapley values.
For example, we have $HY_i(e^{(S,P)}) = \zeta_i(S) \cdot 17/203$ for $(S,P) = (\{1,2\}, \{\{1,2\}, \{3,4\}, \{5\}, \{6\}\})$.
\end{example}

\begin{table}
\centering
\setlength{\tabcolsep}{10pt}
\begin{tabular}{cc|ccccc}
& & \multicolumn{5}{c}{$\varphi_i(e^{(S,P)})/\zeta_i(S)$}\\
$S$ & $P \setminus \{S\}$ & $MQ_i$ & $EF_i$ & $HY_i$ & $SS_i$ & $MY_i$ \\
\hline
$\{1,2\}$ & $\{\{3,4,5,6\}\}$ 	
& 1 & 0 & 5/203 & 6/24 & 10 \\
\hline
$\{1,2\}$ & $\{\{3,4,5\}, \{6\}\}$ 
& 0 & 0 & 10/203 & 2/24 & -6 \\
$\{1,2\}$ & $\{\{3,4,6\}, \{5\}\}$ 
& 0 & 0 & 10/203 & 2/24 & -6 \\
$\{1,2\}$ & $\{\{3,5,6\}, \{4\}\}$ 
& 0 & 0 & 10/203 & 2/24 & -6 \\
$\{1,2\}$ & $\{\{3\}, \{4,5,6\}\}$ 
& 0 & 0 & 10/203 & 2/24 & -6 \\
\hline
$\{1,2\}$ & $\{\{3,4\}, \{5,6\}\}$ 
& 0 & 0 & 10/203 & 1/24 & -5 \\
$\{1,2\}$ & $\{\{3,5\}, \{4,6\}\}$ 
& 0 & 0 & 10/203 & 1/24 & -5 \\
$\{1,2\}$ & $\{\{3,6\}, \{4,5\}\}$ 
& 0 & 0 & 10/203 & 1/24 & -5 \\
\hline
$\{1,2\}$ & $\{\{3,4\}, \{5\}, \{6\}\}$ 
& 0 & 0 & 17/203 & 1/24 & 9 \\
$\{1,2\}$ & $\{\{3,5\}, \{4\}, \{6\}\}$ 
& 0 & 0 & 17/203 & 1/24 & 9 \\
$\{1,2\}$ & $\{\{3,6\}, \{4\}, \{5\}\}$ 
& 0 & 0 & 17/203 & 1/24 & 9 \\
$\{1,2\}$ & $\{\{3\}, \{4,5\}, \{6\}\}$ 
& 0 & 0 & 17/203 & 1/24 & 9 \\
$\{1,2\}$ & $\{\{3\}, \{4,6\}, \{5\}\}$ 
& 0 & 0 & 17/203 & 1/24 & 9 \\
$\{1,2\}$ & $\{\{3\}, \{4\}, \{5,6\}\}$ 
& 0 & 0 & 17/203 & 1/24 & 9 \\
\hline
$\{1,2\}$ & $\{\{3\},\{4\},\{5\},\{6\}\}$ 
& 0 & 1 & 26/203 & 1/24 & -24 \\
\end{tabular}
\caption{Values $\varphi_i(e^{(S,P)})/\zeta_i(S)$ for $N \!=\! \{1,\dots,6\}$, $S = \{1,2\}$, $i=1$, arbitrary partition $P$ that contains $S$ and different extended Shapley values $\varphi$.
%Cells most important in our complexity results are marked.
}
\label{table:esv}
\end{table}

Since extended Shapley values are the main subject of our work, let us provide some additional intuition behind them.
Every extended Shapley value that satisfies Shapley's axioms can be obtained using the \emph{process approach}, similar to the interpretation of the Shapley value \cite{Skibski:etal:2018:stochastic}.
Assume players leave the grand coalition one by one in a random order and divide themselves into groups outside.
As the player leaves, she chooses one of the coalitions outside or creates a new coalition, each with some probability. 
As a result, the player receives the payoff equal to the change in the value of the coalition of players they left.
Specifically, if player $i$ leaves coalition $S$ ($i \in S$) in partition $P$ ($S \in P$) which results in forming $S' = S \setminus \{i\}$ in partition $P'$, then the payoff equals $g(S,P) - g(S',P')$.
Now, the extended Shapley value is the expected payoff in this process.

Extended Shapley values differ in probabilities of choices made by the leaving player:
\begin{itemize}
\item According to the MQ-value, the player always joins the (unique) existing coalition outside.
\item According to the EF-value, the player always creates a new coalition.
\item According to the HY-value, each final partition should have the same probability; hence, the probability is $\psi(S',P')/\psi(S,P)$.
\item According to the SS-value, the probability of joining a coalition is proportional to its size; specifically, it joins coalition of size $T$ with probability $|T|/(|N|-|S|+1)$ and creates a new one with probability $1/(|N|-|S|+1)$.
\item The MY-value can also be obtained using the process approach, but more complex quasi-probabilities instead of probabilities need to be used.
\end{itemize}

Figure~\ref{fig:process_approach} illustrates the process approach.
For the MQ-value the probability of forming grand coalition outside equals $1$; hence, $(\alpha_1,\alpha_2) = (1,0)$ and $(\beta_1,\beta_2) = (1,0)$.
In turn, for the EF-value the probability of forming partition $\{\{1\},\{2\},\{3\}\}$ equals $1$; hence, $(\alpha_1,\alpha_2)=(0,1)$ and $(\gamma_1,\gamma_2,\gamma_3)=(0,0,1)$.
For the HY-value, the probability of each final partition is equal; hence, $(\alpha_1,\alpha_2)=(2/5,3/5)$, $(\beta_1,\beta_2)=(1/2,1/2)$ and $(\gamma_1,\gamma_2,\gamma_3) = (1/3, 1/3, 1/3)$.
For the SS-value, the probability of joining a larger coalition is larger; hence, $(\alpha_1,\alpha_2) = (1/2,1/2)$, $(\beta_1,\beta_2)=(2/3,1/3)$ and $(\gamma_1,\gamma_2,\gamma_3)=(1/3,1/3,1/3)$.
In case of the MY-value, we only note that $(\alpha_1,\alpha_2)=(-1,2)$.

\begin{figure}[t]
\centering
\begin{tikzpicture}[x=15cm,y=5cm] % change these values to adjust the size of a figure
  \tikzset{     
    e4c node/.style={rectangle,draw,inner sep=2}, 
    e4c edge/.style={above,font=\footnotesize}
  }
  \node[e4c node] (1) at (0.20, 0.825) {$\{1,2,3\}\mid\emptyset$}; 
  \node[e4c node] (2) at (0.40, 0.825) {$\{2,3\}\mid\{1\}$}; 
  \node[e4c node,minimum width=2.2cm] (3) at (0.65, 1.00) {$\{3\}\mid\{1,2\}$}; 
  \node[e4c node,minimum width=2.2cm] (4) at (0.65, 0.65) {$\{3\}\mid\{1\},\{2\}$}; 
  \node[e4c node,minimum width=2.8cm] (5) at (1.00, 1.07) {$\emptyset\ \mid\{1,2,3\}$}; 
  \node[e4c node,minimum width=2.8cm] (6) at (1.00, 0.93) {$\emptyset\ \mid\{1,2\},\{3\}$}; 
  \node[e4c node,minimum width=2.8cm] (7) at (1.00, 0.79) {$\emptyset\ \mid\{1,3\},\{2\}$}; 
  \node[e4c node,minimum width=2.8cm] (8) at (1.00, 0.65) {$\emptyset\ \mid\{1\},\{2,3\}$};
  \node[e4c node,minimum width=2.8cm] (9) at (1.00, 0.51) {$\emptyset\mid\{1\},\{2\},\{3\}$};

  \path[->,draw,thin]
  (1) edge[e4c edge] node[yshift=-1] {$1$} (2)
  (2) edge[e4c edge] node[yshift=-1] {$\alpha_1$} (3)
  (2) edge[e4c edge] node[yshift=-1] {$\alpha_2$} (4)

  (3) edge[e4c edge,above,pos=0.7] node[yshift=-1] {$\beta_1$} (5)
  (3) edge[e4c edge,above,pos=0.7] node[yshift=-3] {$\beta_2$} (6)

  (4) edge[e4c edge,above,pos=0.7] node[yshift=-1] {$\gamma_1$} (7)
  (4) edge[e4c edge,above,pos=0.7] node[yshift=-3] {$\gamma_2$} (8)
  (4) edge[e4c edge,above,pos=0.7] node[yshift=-3] {$\gamma_3$} (9)
  ;
\end{tikzpicture}
\caption{An illustration of the process approach for a fixed permutation $(1,2,3)$.} 
\label{fig:process_approach}
\end{figure}

\subsection{Representations}
We will consider two representations for games with externalities.
Both are extensions of the MC-nets representation.

In all considered representations the game is represented as a set of rules: $\Gamma = \{\gamma_1,\dots,\gamma_k\}$.
The rules, however, differ between representations.
In what follows, we will denote the game represented as a single rule $\gamma$ by $g^{\gamma}$ and as the set of rules $\Gamma$ by $g^{\Gamma}$.

\emph{Marginal contribution nets (MC-nets)} \cite{Ieong:Shoham:2005} are a representation for games without externalities.
The game is represented as a set of MC-nets rules of the form: $(\alpha \rightarrow c)$.
Here, $c \in \mathbb{R}$ is the weight of a rule and $\alpha$ is a boolean expression over $N$ of the form:
\begin{equation}\label{eq:mcnet_form_alpha} 
(a^+_1 \land \dots \land a^+_m \land \lnot a^-_1 \land \dots \land \lnot a^-_l),
\end{equation}
where $a_1^+, \dots, a_m^+ \in N$ are called \emph{positive literals} and $a_1^-, \dots, a_l^- \in N$ are called \emph{negative literals}. 
We denote sets of positive and negative literals by $\oplus(\alpha)$ and $\ominus(\alpha)$, respectively, and assume $\oplus(\alpha) \cap \ominus(\alpha) = \emptyset$ and $\oplus(\alpha) \neq \emptyset$.\footnote{\citeA{Ieong:Shoham:2005} allows rules without positive literals which entails that the empty coalition may have non-zero value. As standard in the literature, we do not allow such situations.}
A coalition $S$ \emph{satisfies} $\alpha$ if it contains all positive literals and does not contain any negative literal, i.e., $\oplus(\alpha) \subseteq S$ and $\ominus(\alpha) \cap S = \emptyset$.
Now, in a game represented as a set of MC-nets rules the value of coalition $S$ is the sum of weights of all satisfied rules.

\begin{example}\label{example:mcnets}
Assume $N = \{1,2,3,4\}$ and consider the following MC-nets rules $\Gamma = \{\gamma_1, \gamma_2\}$:
\[ \gamma_1 = (1 \land 2 \rightarrow c_1), \quad
\gamma_2 = (3 \land \lnot 4 \rightarrow c_2). \]
A coalition has non-zero value if it satisfies at least one of them.
We get:
\[ \hat{g}^{\Gamma}(S) = c_1 \cdot [\{1, 2\} \subseteq S] + c_2 \cdot [3 \in S \land 4 \not \in S] = \begin{cases}
c_1 & \mbox{if } S \in \{\{1,2\}, \{1,2,4\}, \{1,2,3,4\}\}, \\
c_2 & \mbox{if } S \in \{\{3\}, \{1,3\}, \{2,3\}\}, \\
c_1+c_2 & \mbox{if } S = \{1,2,3\}, \\
0 & \mbox{otherwise. }
\end{cases} \]
\end{example}

\subsubsection{Embedded MC-nets \cite{Michalak:etal:2010:embedded}}
\emph{Embedded MC-nets} extends (standard) MC-nets as they allow to specify not only restrictions on the coalition, but also on the partition this coalition is embedded in.

An embedded MC-nets rule is of the form:
\[ (\alpha_1 \mid \alpha_2, \alpha_3, \dots, \alpha_k) \rightarrow c, \]
where $c \in \mathbb{R}$ is the weight of a rule and $\alpha_1, \dots, \alpha_k$ are boolean expressions as in \cref{eq:mcnet_form_alpha}.
An embedded coalition $(S,P)$ satisfies the rule if $S$ satisfies $\alpha_1$ and for every $\alpha_i$ with $i>1$ there exists a coalition $T \in P \setminus \{S\}$ that satisfies it.
Now, in a game represented as a set of embedded MC-nets rules the value of embedded coalition $(S,P)$ is the sum of weights of all satisfied rules.

If $k=1$, then the embedded MC-nets rule is equivalent to a (standard) MC-nets rule.

\begin{example}\label{example:embedded_mcnets}
Assume $N = \{1,\dots,6\}$ and consider the following embedded MC-nets rule:
\[ \gamma = (1 \land 2 \rightarrow 1) \mid (3 \land 5 \land \lnot 4 \land \lnot 6) \ (4 \land \lnot 6) \]
An embedded coalition $(S,P)$ satisfies this rule if $S$ contains $1$ and $2$, and $P \setminus \{S\}$ contains a coalition with $3$ and $5$, but without $4$ and $6$ and a coalition with $4$, but without $6$. 
We get that:
\[ g^{\gamma}(\{1,2\}, \{\{1,2\}, \{3,5\}, \{4\}, \{6\}\}) = 1, \quad g^{\gamma}(\{1,2,6\}, \{\{1,2,6\}, \{3,5\}, \{4\}\}) = 1 \]
and $g^{\gamma}(S,P) = 0$ otherwise.
\end{example}

\subsubsection{Weighted MC-nets \cite{Michalak:etal:2010:weighted}}
In \emph{weighted MC-nets}, one rule is a partitioned set of (standard) MC-nets rules; roughly speaking, a rule applies to an embedded coalition if all (standard) rules are satisfied at the same time.

A weighted MC-nets rule is of the form:
\[ (\alpha_1^1 \rightarrow c_1^1) \dots (\alpha_{k_1}^1 \rightarrow c_{k_1}^1) \mid \dots \mid (\alpha_1^m \rightarrow c_1^m) \dots (\alpha_{k_m}^m \rightarrow c_{k_m}^m), \]
where $(\alpha_j^i \rightarrow c_j^i)$ for every $i \in \{1,\dots,m\}$ and $j \in \{1,\dots,k_i\}$ is an MC-nets rule.
A partition $P$ satisfies the rule if it can be partitioned into $m$ disjoint subsets $P = R_1 \dot\cup \dots \dot\cup R_m$ such that  for every $i \in \{1,\dots,m\}$ and $j \in \{1,\dots,k_i\}$ rule $(\alpha_j^i \rightarrow c_j^i)$ is satisfied by some coalition from $R_i$.
Now, in a game represented as a set of weighted MC-nets rules the value of embedded coalition $(S,P)$ is the sum of weights of all MC-nets rules $(\alpha \rightarrow c)$ that $S$ satisfies in all (weighted MC-nets) rules satisfied by $P$.

If $m=2$, $k_1=1$ and $c^i_j = 0$ for $i = 2$, then the weighted MC-nets rule is equivalent to an embedded MC-nets rule.
If $m=1$ and $k_1=1$, then the weighted MC-nets rule is a standard MC-nets rule.

\begin{example}\label{example:weighted_mcnets}
Assume $N = \{1,\dots,6\}$ and consider the following weighted MC-nets rule:
\[ \gamma = (1 \land 2 \land \lnot 4 \rightarrow c_1) \ (6 \land \lnot 4 \rightarrow c_2) \mid (3 \land 5 \land \lnot 4 \rightarrow c_3)\]
Partition $P$ satisfies this rule if $P$ can be divided into two parts $P = R_1 \dot\cup R_2$ such that
\begin{itemize}
\item in $R_1$ there exists a coalition with both $1$ and $2$ and a coalition with $6$, but both does not contain $4$
\item in $R_2$ there exists a coalition with both $3$ and $5$, but without $4$.
\end{itemize}
We get that only $P_1 = \{\{1,2\}, \{3,5\}, \{4\}, \{6\}\}$ and $P_2 = \{\{1,2,6\}, \{3,5\}, \{4\}\}$ satisfies $\gamma$.
By summing weights of all MC-nets satisfied by each coalition from these partitions we get:
\begin{center}
\begin{tabular}{lll}
$g^{\gamma}(\{1,2\},P_1) = c_1$, & $g^{\gamma}(\{3,5\},P_1) = c_3,$ \ \ & $g^{\gamma}(\{6\},P_1) = c_2$, \\
$g^{\gamma}(\{1,2,6\},P_2) = c_1+c_2$, \ \  & $g^{\gamma}(\{3,5\},P_2) = c_3$
\end{tabular}
\end{center}
and $g^{\gamma}(S,P) = 0$, otherwise.
For $c_1=1$ and $c_2=c_3 = 0$ the weighted MC-nets rule is equivalent to embedded MC-nets rule from \cref{example:embedded_mcnets}.
\end{example}

\section{From MC-Nets to Graphs}\label{section:from_mcnets_to_graphs}
The goal of this section is to show that (1) every embedded and weighted MC-nets rule is equivalent to a set of \emph{hybrid rules}, (2) every hybrid rule can be represented as a graph and, as a result, (3) every game represented as embedded or weighted MC-nets rules can be defined based on (proper vertex) colorings in such graphs.

%Lemmas~\ref{lemma:hybrid_rules_weighted} and \ref{lemma:hybrid_rules_embedded} will be crucial in our complexity analysis as they allow us to focus on hybrid rules.

\subsection{Hybrid rules}
We begin by introducing a subclass of weighted MC-nets rules under the name \emph{hybrid rules}.
The name comes from the fact that hybrid rules, while they are formally weighted MC-nets rules, have a form almost identical to embedded MC-nets rules.

\begin{definition}\label{definition:simple} (Hybrid rules)
A \emph{hybrid rule} is a weighted MC-nets rule with $m=1$ of the form:
\[ \gamma = (\alpha_1 \rightarrow c) (\alpha_2 \rightarrow 0) \dots (\alpha_k \rightarrow 0), \]
with $\oplus(\alpha_i) \cap \oplus(\alpha_j) = \emptyset$ for $i,j \in \{1,\dots,k\}$, $i \neq j$ and $\bigcup_{i=1}^k \oplus(\alpha_i) = N$.
We will call $c \in \mathbb{R}$ the weight of rule $\gamma$.
\end{definition}
%We will use $k$ as the size of $\gamma$ from this point onward.

Note that for every hybrid rule $\{\oplus(\alpha_1), \dots, \oplus(\alpha_k)\}$ is a partition of $N$.
Based on the definition of weighted MC-nets, embedded coalition $(S,P)$ satisfies hybrid rule $\gamma$ if $S$ satisfies $\alpha_1$ and for every $\alpha_i$ with $i>1$ there exists a coalition in $P$ that satisfies it. Note that, unlike embedded MC-nets, $S$ may also satisfy $\alpha_i$ for $i>1$.

In \cref{lemma:hybrid_rules_weighted}, we show that every weighted MC-nets rule can be expressed using polynomially many hybrid rules. 
We will use a notion of compatibility: we say that expressions $\alpha_i, \alpha_j$ are \emph{compatible}, denoted by $\alpha_i \sim \alpha_j$, if there exists a coalition that satisfies both of them, i.e., if $(\oplus(\alpha_i) \cup \oplus(\alpha_j)) \cap (\ominus(\alpha_i) \cup \ominus(\alpha_j)) = \emptyset$.

\begin{lemma}\label{lemma:hybrid_rules_weighted}
Every weighted MC-nets rule of size $S$ is equivalent to a set of hybrid rules of size $\mathrm{poly}(n,S)$.
\end{lemma}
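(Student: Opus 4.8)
The plan is to turn a single weighted MC-nets rule --- with MC-nets subrules $(\alpha^i_j \to c^i_j)$ grouped into blocks $i = 1, \dots, m$ --- into hybrid rules in two stages: a normalization that makes all positive literals pairwise disjoint, and then a reformulation of the ``$P$ satisfies the rule'' predicate as a conjunction of single-expression satisfiability conditions, which is exactly the shape of a hybrid rule. Since the value of $(S,P)$ under the rule is $[P \text{ satisfies the rule}] \cdot \sum_{i,j} c^i_j \, [S \text{ satisfies } \alpha^i_j]$ and the game of a set of rules is the sum of the per-rule games, by linearity I would first peel off one weight at a time: it suffices to handle the case where a single subrule, say with expression $\alpha^*$, has a nonzero weight $c$, so that the target game is $(S,P) \mapsto c \cdot [S \text{ satisfies } \alpha^*] \cdot [P \text{ satisfies the rule}]$. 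Everything then hinges on representing the purely combinatorial predicate ``$P$ satisfies the rule''.

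For the normalization, observe that inside a block $B_i$, if two expressions share a positive literal then in any witnessing partition $P = P_1 \dot\cup \cdots \dot\cup P_m$ the coalitions of $P_i$ that witness them must coincide; hence one may replace each connected component of the ``shared positive literal'' graph on $B_i$ by a single expression whose positive (resp.\ negative) literals are the union of those in the component, without changing which partitions satisfy the rule. If, after doing this in every block, some player still occurs as a positive literal in two distinct blocks, then no partition can satisfy the rule at all (one part would have to belong to two blocks), the game is identically $0$, and the empty set of hybrid rules suffices. Otherwise all positive literals are now pairwise disjoint, so there are at most $n$ subrules; writing $L_i = \bigcup_j \oplus(\alpha^i_j)$ for the positive literals of block $i$, the sets $L_1, \dots, L_m$ are pairwise disjoint. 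One also checks, using $S \in P$, that if $S$ satisfies $\alpha^*$ and $P$ satisfies the rule then $S$ is the unique coalition of $P$ containing $\oplus(\alpha^*)$ and it satisfies the merged expression containing $\alpha^*$; so we may assume that $\alpha^*$ is itself one of the normalized subrules.

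The key step is the reformulation. With pairwise disjoint positive literals, $P$ satisfies the rule if and only if (a) each $\oplus(\alpha^i_j)$ lies inside a single coalition of $P$ that avoids $\ominus(\alpha^i_j)$, and (b) no coalition of $P$ meets $L_i$ for two different $i$ --- condition (b) being precisely what allows the witnessing coalitions to be assigned consistently to the parts $P_1, \dots, P_m$. Condition (b) is global, and the naive ways of enforcing it (enumerating how $P$ groups the relevant players, or which coalitions each block uses) cost a factor exponential in $m$; this is the main obstacle, and it is removed by \emph{localizing} (b). Let $\beta^i_j$ be the expression $\alpha^i_j$ with its set of negative literals enlarged by $\bigcup_{i' \neq i} L_{i'}$. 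Then ``for all $i,j$, some coalition of $P$ satisfies $\beta^i_j$'' is equivalent to (a)$\wedge$(b): the added negatives forbid the coalition witnessing $\alpha^i_j$ from touching any other block, which is exactly (b), while conversely any such family of witnessing coalitions can be sorted by the block it serves and extended to a partition $P_1 \dot\cup \cdots \dot\cup P_m$. Hence ``$P$ satisfies the rule'' equals $\prod_{i,j} [\, \exists T \in P : T \text{ satisfies } \beta^i_j \,]$, and combining with $c \cdot [S \text{ satisfies } \alpha^*]$ --- again absorbing, via $S \in P$ and $\oplus(\alpha^*) \neq \emptyset$, the constraint ``$\exists T \in P: T$ satisfies $\beta^*$'' into ``$S$ satisfies $\beta^*$'', where $\beta^*$ is the enlargement of $\alpha^*$ --- gives exactly the game of the weighted MC-nets rule $(\beta^* \to c) \prod_{(i,j)}(\beta^i_j \to 0)$. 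Appending weight-$0$ expressions $(\{u\} \to 0)$ for every player $u$ in no positive literal makes the positive literals a partition of $N$ while changing nothing (such an expression is always witnessed), so this is a genuine hybrid rule. Each hybrid rule produced has size $O(S + n^2)$ and there is one for each nonzero subrule, at most $S$ in all; the remaining size bookkeeping and the check that every $\beta^i_j$ is well-formed ($\oplus \neq \emptyset$ and $\oplus \cap \ominus = \emptyset$, the latter because $\oplus(\alpha^i_j) \subseteq L_i$ is disjoint from all the $L_{i'}$) are routine.
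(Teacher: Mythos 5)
Your proposal is correct and follows essentially the same route as the paper's proof: merge subrules that share positive literals, add the other blocks' positive literals as negative literals so the bars can be dropped, split by linearity into one hybrid rule per (nonzero) weight, and pad with singleton zero-weight subrules $((p)\rightarrow 0)$ for uncovered players. The only differences are cosmetic --- you split by linearity before merging rather than after, and you should state explicitly (as the paper does) that an incompatible merge inside a block makes the rule contradictory, so the empty set of hybrid rules is returned in that degenerate case as well.
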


\begin{proof}
Consider a weighted MC-nets rule:
\[ (\alpha_1^1 \rightarrow c_1^1) \dots (\alpha_{k_1}^1 \rightarrow c_{k_1}^1) \mid \dots \mid (\alpha_1^m \rightarrow c_1^m) \dots (\alpha_{k_m}^m \rightarrow c_{k_m}^m). \]
Assume $\oplus(\alpha_j^i) \cap \oplus(\alpha_{j'}^{i'}) \neq \emptyset$ for some $i,i' \in \{1,\dots,m\}$, $j \in \{1,\dots,k_i\}$, $j' \in \{1,\dots,k_{i'}\}$.
If $i \neq i'$ or $i=i'$ but expressions $\alpha_j^i$ and $\alpha_{j'}^{i'}$ are not compatible, then the weighted MC-nets rule is contradictory; hence it is equivalent to an empty set of hybrid rules.
If $i = i'$ and $\alpha_j^i$ and $\alpha_{j'}^{i'}$ are compatible, then rules $(\alpha_j^i \rightarrow c_j^i)$ and $(\alpha_{j'}^{i'} \rightarrow c_{j'}^{i'})$ can be combined into $(\alpha_j^i \land \alpha_{j'}^{i'} \rightarrow (c_j^i + c_{j'}^{i'}))$. 
Hence, in what follows, we assume that $\oplus(\alpha_j^i) \cap \oplus(\alpha_{j'}^{i'}) = \emptyset$ for every $i,i' \in \{1,\dots,m\}$, $j \in \{1,\dots,k_i\}$, $j' \in \{1,\dots,k_{i'}\}$.

Define $\oplus^i = \bigcup_{j=1}^{k_i} \oplus(\alpha_j^i)$ for every $i \in \{1,\dots,m\}$ and $\oplus = \bigcup_{i=1}^m \oplus^i$.
Now, define $\beta_j^i$ as $\alpha_j^i$ with players $\oplus \setminus \oplus^i$ added as negative literals for every $i \in \{1,\dots,m\}$ and $j \in \{1,\dots, k_i\}$. Since $\beta_j^i$, $\beta_{j'}^{i'}$ for $i \neq i'$ cannot be satisfied by the same coalition, we get that the following weighted MC-nets rule without bars ("$|$") is equivalent to the original one:
\[ (\beta_1^1 \rightarrow c_1^1) \dots (\beta_j^i \rightarrow c_j^i) \dots (\beta_{k_m}^m \rightarrow c_{k_m}^m). \]
In each MC-nets rule we added at most $S$ literals, so the size of this rule is $O(S^2)$.

Now, we divide the formula into $k_1 + \dots + k_m$ separate rules in which only one weight $c$ is non-zero:
\[(\beta_1^1 \rightarrow c_1^1) (\beta_2^1 \rightarrow 0) \dots (\beta_j^i \rightarrow 0) \dots (\beta_{k_m}^m \rightarrow 0)\]
\[(\beta_1^1 \rightarrow 0) (\beta_2^1 \rightarrow c_2^1) \dots (\beta_j^i \rightarrow 0) \dots (\beta_{k_m}^m \rightarrow 0)\]
\[\dots\]
\[(\beta_1^1 \rightarrow 0) (\beta_2^1 \rightarrow 0) \dots (\beta_j^i \rightarrow 0) \dots (\beta_{k_m}^m \rightarrow c_{k_m}^m).\]
The size of this set of rules is $O(S^3)$. 

Finally, for every player $p \in N \setminus \oplus$ we add an MC-nets rule $((p) \rightarrow 0)$ to every rule. 
We added at most $n$ MC-nets rules to $O(S)$ rules; hence, the total size of the final set of rules is $O(S^3 + n S)$.
This concludes the proof.
\end{proof}

\begin{example}\label{example:hybrid_weighted}
Recall the weighted MC-nets rule from \cref{example:weighted_mcnets}:
\[ (1 \land 2 \land \lnot 4 \rightarrow c_1) \ (6 \land \lnot 4 \rightarrow c_2) \mid (3 \land 5 \land \lnot 4 \rightarrow c_3).\]
First, combine all rules into one block: we have $\oplus^1 = \{1,2,6\}$ and $\oplus^2 = \{3,5\}$, so we get:
\[ (1 \land 2 \land \lnot 3 \land \lnot 4 \land \lnot 5\rightarrow c_1) \ (6 \land \lnot 3 \land \lnot 4 \land \lnot 5 \rightarrow c_2) \ (3 \land 5 \land \lnot 1 \land \lnot 2 \land \lnot 4 \land \lnot 6 \rightarrow c_3).\]
Now, we split this rule into three and add $(4 \rightarrow 0)$ to each rule, as $4$ does not appear as a positive literal in any rule:
\[ (1 \land 2 \land \lnot 3 \land \lnot 4 \land \lnot 5\rightarrow c_1) (6 \land \lnot 3 \land \lnot 4 \land \lnot 5 \rightarrow 0) (3 \land 5 \land \lnot 1 \land \lnot 2 \land \lnot 4 \land \lnot 6 \rightarrow 0) (4 \rightarrow 0), \]
\[ (1 \land 2 \land \lnot 3 \land \lnot 4 \land \lnot 5\rightarrow 0) (6 \land \lnot 3 \land \lnot 4 \land \lnot 5 \rightarrow c_2) (3 \land 5 \land \lnot 1 \land \lnot 2 \land \lnot 4 \land \lnot 6 \rightarrow 0) (4 \rightarrow 0),\]
\[ (1 \land 2 \land \lnot 3 \land \lnot 4 \land \lnot 5\rightarrow 0) (6 \land \lnot 3 \land \lnot 4 \land \lnot 5 \rightarrow 0) (3 \land 5 \land \lnot 1 \land \lnot 2 \land \lnot 4 \land \lnot 6 \rightarrow c_3) \ (4 \rightarrow 0).\]
This concludes the construction.
\end{example}

In turn, in \cref{lemma:hybrid_rules_embedded}, we show that embedded MC-nets rules are equivalent to a subset of hybrid rules.

\begin{definition}\label{definition:simple_hybrid_rules} (Regular hybrid rules)
A hybrid rule is \emph{regular} if for every $\alpha_i,\alpha_j$, ($i,j>1$) compatible with $\alpha_1$ it holds $\alpha_i \sim \alpha_j$ and $|{\textstyle\oplus}(\alpha_i)|=1$.
%\[
%\forall_{1 < i,j \leq k} ((\alpha_1 \sim \alpha_i) \land (\alpha_1 \sim \alpha_j)) \rightarrow ((\alpha_i \sim \alpha_j) \land (|{\textstyle\oplus}(\alpha_i)|=|{\textstyle\oplus}(\alpha_j)|=1)).
%\]
\end{definition}

\begin{lemma}\label{lemma:hybrid_rules_embedded}
Every embedded MC-nets rule of size $S$ is equivalent to a regular hybrid rule of size $\mathrm{poly}(n,S)$.
Moreover, every regular hybrid rule of size $S$ is equivalent to an embedded MC-nets rule of size $\mathrm{poly}(n,S)$.
\end{lemma}

\begin{proof}
Consider an embedded MC-nets rule:
\[ (\alpha_1 | \alpha_2, \dots, \alpha_k) \rightarrow c .\]
For this rule we will construct an equivalent regular hybrid rule.

Assume $\oplus(\alpha_i) \cap \oplus(\alpha_j) \neq \emptyset$ for some $1 \leq i < j \leq k$. 
If $i=1$ or $\alpha_i$ and $\alpha_j$ are not compatible, then the embedded MC-nets rule is contradictory; hence it is equivalent to any hybrid rule with $c=0$.
If $i,j > 1$ and $\alpha_i$ and $\alpha_j$ are compatible, then $\alpha_i$ and $\alpha_j$ can be replaced by $\alpha_i \land \alpha_j$.
Hence, in what follows, we assume that $\oplus(\alpha_i) \cap \oplus(\alpha_j) = \emptyset$ for every $i,j \in \{1,\dots,k\}$.

Let us define expressions $\beta_1$ as $\alpha_1$ with $\oplus(\alpha_2) \cup \dots \cup \oplus(\alpha_k)$ added as negative literals.
Since $\beta_1$ and $\alpha_i$ for any $i > 1$ cannot be now satisfied by the same coalition, we get that the following weighted MC-nets rule is equivalent to the original one:
\[ (\beta_1 \rightarrow c) (\alpha_2 \rightarrow 0) \dots (\alpha_k \rightarrow 0). \]
Note that this may not be a hybrid rule, since not all players may appear as positive literals. 
Hence, we add an MC-nets rule $((p) \rightarrow 0)$ for every player $p \in N \setminus \bigcup_{i=1}^k \oplus(\alpha_i)$.

Note that the resulting hybrid rule is regular: only MC-nets rules of the form $((p) \rightarrow 0)$ are compatible with $\beta_1$ and indeed each of them has one positive literal and they are all compatible with each other. 
For the original embedded MC-nets rule of size $S$, the size of the resulting hybrid rule is $O(S^2 + N)$.

Now, consider a regular hybrid rule:
\[ (\alpha_1 \rightarrow c) (\alpha_2 \rightarrow 0) \dots (\alpha_k \rightarrow 0). \]
For this rule we will construct an equivalent embedded MC-nets rule.

Without loss of generality assume expressions $\alpha_{m+1}, \dots, \alpha_k$ are compatible with $\alpha_1$.
Fix $i \in \{m+1,\dots,k\}$.
We know that $|\!\oplus\!(\alpha_i)| = 1$.
Assume $\ominus(\alpha_i) \neq \emptyset$ and take player $p \in \ominus(\alpha_i)$. 
From the definition of hybrid rules we know that every player appears as a positive literal somewhere, i.e., there exists $\alpha_j$ such that $p \in \oplus(\alpha_j)$. 
Hence, we can remove $\lnot p$ from $\alpha_i$ and add a player from $\oplus(\alpha_i)$ as a negative literal in $\alpha_j$ (unless it is already in $\ominus(\alpha_j)$---in such a case addition can be omitted) without changing the satisfiability of the rule. 
From the definition of regular hybrid rules we get that $\alpha_j$ is not compatible with $\alpha_1$; hence, the rule remain regular.

By doing so for every negative literal of every rule compatible with $\alpha_1$ (other than $\alpha_1$), we will obtain a rule of the form:
\[ (\alpha_1 \rightarrow c) (\beta_2 \rightarrow 0) \dots (\beta_m \rightarrow 0) ((p_1) \rightarrow 0) \dots ((p_{k-m}) \rightarrow 0), \]
in which all $\beta_2,\dots,\beta_m$ are not compatible with $\alpha_1$. 
Hence, this rule is equivalent to an embedded MC-nets rule:
\[ (\alpha_1 | \beta_2, \dots, \beta_m) \rightarrow c. \]
Note that the size of the resulting embedded MC-nets rule is smaller than the size of the original hybrid rule. This concludes the proof.
\end{proof}

\begin{example}\label{example:hybrid_embedded}
Recall the embedded MC-nets rule from \cref{example:embedded_mcnets}:
\[ \gamma = (1 \land 2 \rightarrow 1) \mid (3 \land 5 \land \lnot 4 \land \lnot 6) \ (4 \land \lnot 6) \]
To obtain a hybrid rule we add negative literals to the first MC-nets rule, add $(6 \rightarrow 0)$ and get:
\[(1 \land 2 \land \lnot 3 \land \lnot 4 \land \lnot 5 \rightarrow 1) \ (3 \land 5 \land \lnot 4 \land \lnot 6 \rightarrow 0) \ (4 \land \lnot 6 \rightarrow 0) \ (6 \rightarrow 0) \]
Only the last MC-nets rule is compatible with the first rule and it contains one positive literal. 
Hence, the rule is regular.

Note that this rule is equivalent to the first hybrid rule obtained in \cref{example:hybrid_weighted}.
\end{example}

\subsection{Player-graphs}\label{section:player-graphs}
So far, we have shown the mapping from weighted MC-nets and embedded MC-nets rules to hybrid rules.
In what follows, we show that every hybrid rule can be represented as a graph and the corresponding game can be defined based on colorings in such a graph.
We will consider undirected graphs with nodes labeled by a partition of players and one node highlighted.
We will call them \emph{player-graphs}.

\begin{definition} (Player-graphs)
A \emph{player-graph} is a tuple $G = (V,E, l, v^*)$, where $V$ is the set of nodes, $E$ is the set of undirected edges, i.e., subsets of nodes of size 2, $l: V \rightarrow 2^N$ is a node label function such that node labels form a partition of $N$ (formally: $l(v) \cap l(u) = \emptyset$ for every $u,v \in V$ and $\bigcup_{v \in V} l(v) = N$) and $v^* \in V$ is a highlighted node. 
\end{definition}

Let us introduce some basic graph definitions.
Node $u$ is \emph{adjacent} to $v$ if $\{u, v\} \in E$.
The set of nodes adjacent to node $v$, called \emph{neighbors}, is denoted by $\mathcal{N}(v)$.
A \emph{clique} is a subset of nodes every two of which are adjacent.
An \emph{independent set} is a subset of nodes no two of which are adjacent.

A (proper vertex) \emph{$k$-coloring} of a graph is a function, $f: V \rightarrow \{1,\dots,k\}$, that assigns colors $\{1,\dots,k\}$ to nodes in a way that every two adjacent nodes have different colors, i.e., $f(v) \neq f(u)$ for every $\{v,u\} \in E$. 
In other words, nodes colored with the same color form an independent set.
The set of all $k$-colorings of a graph $G$ is denoted by $C_k(G)$.

We will need some additional notation regarding colorings in players-graphs.
A $k$-coloring $f$ results in the partition of nodes:
\[ VP_f = \{f^{-1}(i) : i \in \{1,\dots,k\}, f^{-1}(i) \neq \emptyset\}. \]
The set in $VP_f$ that contains highlighted node $v^*$ is denoted by $VS_f^*$.
The partition of nodes imposes a partition of players in the player-graph is denoted by $P_f$:
\[ P_f = \left\{ \bigcup_{v \in U} l(v) : U \in VP_f \right\}. \]
Clearly, $|P_f| = |VP_f|$.
The set in $P_f$ that correspond to $VS^*_f$, i.e., that contains players from the label of $v^*$, will be denoted by $S_f^*$.

We say that $f$ uses exactly $p$ colors if $|VP_f|=p$. 
We will denote by $\#_f$ the number of all $k$-colorings that result in the partition $VP_f$; note that $\#_f = k(k-1) \cdots (k-|VP_f|+1)$.

\begin{figure}[t]
\centering
\begin{tikzpicture}[x=0.25\textwidth,y=0.25\textwidth]
  \node[main node, color1,minimum size=1cm] (1) at (0.00, 2.80) {};
  
  \node[main node, label={left:$v_1$}, color1] (1) at (0.00, 2.80) {$1$,$2$};
  \node[main node, label={left:$v_2$}, color3] (2) at (0.50, 3.05) {$3$,$5$};
  \node[main node, label={left:$v_3$}, color2] (3) at (0.50, 2.55) {$4$};
  \node[main node, label={left:$v_4$}, color1] (4) at (1.00, 2.80) {$6$};

  \path[draw,thick]
  (1) edge (2)
  (1) edge (3)
  (2) edge (3)
  (3) edge (4)
  (2) edge (4)
  ;

\end{tikzpicture}
\caption{Player-graph $(V,E,l,v_1)$ and a 4-coloring $f$ with colors: 1~(blue/striped), 2~(yellow/checked), 3~(red/plain), 4~(green/dotted). Note that color 2 is not used.} 
\label{figure:g_gamma}
\end{figure}

Let us now define a player-graph that will represent a hybrid rule.

\begin{definition}\label{definition:g_gamma} (Graph $G^{\gamma}$)
For a hybrid rule $\gamma$, player-graph $G^{\gamma} = (V, E, l, v^*)$ is a graph where nodes represent expressions $\alpha_1,\dots,\alpha_k$ and are labeled with sets $\oplus(\alpha_1), \dots, \oplus(\alpha_k)$ and edges connect incompatible expressions $\alpha_i, \alpha_j$:
\begin{itemize}
\item $V = \{v_1,\dots,v_k\}$, $v^* = v_1$ and $l(v_i) = \oplus(\alpha_i)$ for every $v_i \in V$; and
\item $E = \{\{v_i,v_j\} \subseteq V : \alpha_i \not\sim \alpha_j\}$.
\end{itemize}
\end{definition}
\noindent We note that a similar construction of a graph for the right-hand side part of the embedded MC-nets was proposed in \cite{Skibski:etal:2016:kcoalitional}.

\begin{example}\label{example:g_gamma}
Recall a hybrid rule from \cref{example:hybrid_embedded}:
$(\alpha_1 \rightarrow 1) (\alpha_2 \rightarrow 0) (\alpha_3 \rightarrow 0) (\alpha_4 \rightarrow 0)$
with
\[ \alpha_1 = (1 \land 2 \land \lnot 3 \land \lnot 4 \land \lnot 5), \quad \alpha_2 = (3 \land 5 \land \lnot 4 \land \lnot 6), \quad \alpha_3 = (4 \land \lnot 6), \quad \alpha_4 = (6) \]
Note that $\alpha_1 \sim \alpha_4$ and $\alpha_i \not\sim \alpha_j$ for other $i,j \in \{1,\dots,4\}$, $i \neq j$.
Hence, we obtain a graph depicted in Figure~\ref{figure:g_gamma}.

The presented coloring $f$ induces the following partition of nodes and, consequently, of players:
\[ VP_f = \{\{v_1, v_4\}, \{v_2\}, \{v_3\}\}, \quad P_f = \{\{1,2,6\}, \{3,5\}, \{4\}\}. \]
There are $\#_f = 4 \cdot 3 \cdot 2 = 24$ different 4-colorings that result in the same partitions.
We have $VS_f^* = \{v_1, v_4\}$ and $S_f^* = \{1,2,6\}$.
\end{example}

Definition~\ref{definition:g_gamma} shows that every hybrid rule can be represented as a player-graph.
Also, every player-graph represents some hybrid rule, which we prove in the following lemma.

\begin{lemma}\label{lemma:g_gamma_weighted}
For every player-graph $G = (V,E,l,v^*)$ there exists a hybrid rule $\gamma$ s.t. $G = G^{\gamma}$.
\end{lemma}

\begin{proof}
Let $G = (V,E,l,v^*)$ be a player-graph and without loss of generality assume $V = \{v_1,\dots,v_k\}$ and $v_1 = v^*$.
We can construct a hybrid rule $\gamma$ as follows: for every node $v_i \in V$ we create a boolean expression $\alpha_i$ as in \cref{eq:mcnet_form_alpha} with $l(v_i)$ as positive literals and $\bigcup_{v_j \in \mathcal{N}(v_i)} l(v_j)$ as negative literals. 
Now, for $\gamma = (\alpha_1 \rightarrow 1) (\alpha_2 \rightarrow 0) \dots (\alpha_k \rightarrow0)$ we have $G = G^{\gamma}$.
\end{proof}

The next lemma states the necessary and sufficient conditions for the graph to represent a hybrid rule that is regular.
We will call such player-graphs \emph{regular} as well.

\begin{definition} (Regular player-graphs)
A player-graph $G = (V,E,l,v^*)$ is \emph{regular} if nodes from $V \setminus (\mathcal{N}(v^*) \cup \{v^*\})$ form an independent set and have singleton labels, i.e., $|l(u)|=1$ for every $u \in V \setminus (\mathcal{N}(v^*) \cup \{v^*\})$.
\end{definition}

\begin{lemma}\label{lemma:g_gamma_embedded}
For a player-graph $G = (V,E,l,v^*)$ there exists a regular hybrid rule $\gamma$ such that $G = G^{\gamma}$ if and only if the player-graph is regular.
\end{lemma}

\begin{figure}[t]
\centering
\begin{tikzpicture}[x=5cm,y=5cm] % change these values to adjust the size of a figure
  \node[main node,minimum size=1cm] (1) at (-0.10, 1.18) {};
  \node[main node,label={left:$v^*$}] (1) at (-0.10, 1.18) {$1$}; 
  \node[main node] (2) at (0.42, 1.35) {$2$}; 
  \node[main node] (3) at (0.62, 1.12) {$5$,$7$}; 
%  \node[main node] (4) at (0.38, 1.07) {$10$}; 
  \node[main node] (5) at (0.43, 0.86) {$3$,$6$}; 
  \node[main node] (6) at (1.10, 1.31) {$4$}; 
  \node[main node] (7) at (1.10, 1.09) {$8$}; 
  \node[main node] (8) at (1.10, 0.86) {$9$}; 

  \draw[dashed] (0.22,0.72) rectangle (0.75, 1.49);
  \node at (0.1, 0.78) {$\mathcal{N}(v^*)$};

  \path[draw,thick]
  (3) edge  (5)
  (5) edge  (1)
  (3) edge  (2)
  (1) edge  (3)
  (2) edge  (1)
%  (1) edge  (4)
%  (2) edge  (4)
%  (4) edge  (5)
%  (4) edge  (3)
  (6) edge  (3)
  (7) edge  (5)
  (7) edge  (3)
  ;
\end{tikzpicture}
\caption{The structure of regular player-graphs, i.e., player-graphs that represent embedded MC-nets rule: nodes not connected to $v^*$ form an independent set and have singleton labels.}
\label{figure:graph_gamma_embedded}
\end{figure}

\begin{proof}
Fix a regular hybrid rule $\gamma$.
From the definition we know that all expressions compatible with $\alpha_1$ are compatible with each other and have a single positive literal.
Hence, nodes in $G^{\gamma}$ not adjacent to $v^*$ form an independent set and have labels of size $1$.

On the other hand, let $G = (V,E,l,v^*)$ be a regular player-graph and without loss of generality assume $V = \{v_1,\dots,v_k\}$ and $v_1 = v^*$.
To construct a regular hybrid rule $\gamma$ we repeat the construction from the proof of \cref{lemma:g_gamma_weighted}: for every node $v_i \in V$ we create a boolean expression $\alpha_i$ as in \cref{eq:mcnet_form_alpha} with $l(v_i)$ as positive literals and $\bigcup_{v_j \in \mathcal{N}(v_i)} l(v_j)$ as negative literals. 
Now, it is easy to verify that for $\gamma = (\alpha_1 \rightarrow 1) (\alpha_2 \rightarrow 0) \dots (\alpha_k \rightarrow 0)$ we have $G = G^{\gamma}$.
Clearly, if $v_i$ and $v_j$ are not adjacent, then $\alpha_i$ and $\alpha_j$ are compatible. 
Hence, from the definition of regular player-graphs we get that $\gamma$ is regular.
\end{proof}

Based on Lemmas~\ref{lemma:hybrid_rules_weighted}--\ref{lemma:g_gamma_embedded} we know that every weighted MC-nets rule can be represented as (one or more) player-graphs and every embedded MC-nets rule can be represented as a regular player-graph.

Let us now explain how a game represented as a hybrid rule $\gamma$ can be defined based on graph $G^{\gamma}$.
Fix a hybrid rule $\gamma$ and consider a partition $P$ that satisfies it. 
Since every node in graph $G^{\gamma}$ is labeled with a set of players which is equal to the set of positive literals in some expression $\alpha_i$, it is clear that all these players must appear in the same coalition in $P$.
This observation combined with the fact that every player appears in exactly one node implies that $P$ can be associated with a partition of nodes in graph $G^{\gamma}$.

Consider partition $VP$ of nodes that correspond to $P$.
Note that two adjacent nodes cannot belong to the same coalition in $VP$, because they represent incompatible expressions that cannot be satisfied by one coalition.
Hence, every set in $VP$ is an independent set. 
As a result, we get that $VP$ corresponds to some coloring of a graph.

This analysis is formalized in the following lemma.
Recall that $k$ is the number of boolean expressions in $\gamma$ and nodes in $G^{\gamma}$.

\begin{lemma}\label{lemma:game_from_g_gamma}
Partition $P$ satisfies a hybrid rule $\gamma$ if and only if there exists a $k$-coloring $f \in C_k(G^{\gamma})$ such that $P = P_f$.
Moreover, if the weight of $\gamma$ is $c$, then for every $(S,P) \in EC$:
\begin{equation}\label{eq:lemma:game_from_g_gamma}
g^{\gamma}(S,P) = \sum_{f \in C_k(G^{\gamma}) : (S_f^*,P_f) = (S,P)} \frac{c}{\#_f}.
\end{equation}
\end{lemma}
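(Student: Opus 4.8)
The plan is to prove the biconditional first and then read the value formula off the semantics of weighted MC-nets specialised to hybrid rules. Throughout I would use two structural facts. From Definition~\ref{definition:simple}, the sets $\oplus(\alpha_1),\dots,\oplus(\alpha_k)$ are pairwise disjoint, nonempty and cover $N$, so they form a partition of $N$. From Definition~\ref{definition:g_gamma}, together with disjointness of the $\oplus$'s and the standard MC-net assumption $\oplus(\alpha)\cap\ominus(\alpha)=\emptyset$, for $i\neq j$ we have $\{v_i,v_j\}\in E$ iff $\oplus(\alpha_i)\cap\ominus(\alpha_j)\neq\emptyset$ or $\oplus(\alpha_j)\cap\ominus(\alpha_i)\neq\emptyset$.

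For the direction ``$P$ satisfies $\gamma$ $\Rightarrow$ a coloring exists'': by the weighted-MC-nets semantics with a single outer block ($m=1$), $P$ satisfies $\gamma$ iff for every $j\in\{1,\dots,k\}$ some coalition of $P$ satisfies $\alpha_j$. Since $\oplus(\alpha_j)$ is nonempty and $P$ is a partition, there is a unique coalition $T_j\in P$ with $\oplus(\alpha_j)\subseteq T_j$, and $T_j$ satisfies $\alpha_j$. Define a partition $P^\ast$ of $V$ by placing $v_i,v_j$ in one block iff $T_i=T_j$. If $v_i,v_j$ share a block, the single coalition $T_i$ satisfies both $\alpha_i$ and $\alpha_j$, so $\mathrm{comp}(\alpha_i,\alpha_j)$ and $\{v_i,v_j\}\notin E$; hence each block is independent and assigning distinct colors to distinct blocks gives a proper $k$-coloring $f$ with $P_f=P^\ast$. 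Finally $l(P_f)=P$: for a block $B$ all of whose nodes share coalition $T$ we have $l(B)=\bigcup_{v_i\in B}\oplus(\alpha_i)\subseteq T$, while any $x\in T$ lies in some $\oplus(\alpha_{i_0})$ (the $\oplus$'s cover $N$), forcing $T_{i_0}=T$ and $x\in l(B)$; thus $l(B)=T$, and since every coalition of $P$ arises this way, $l(P_f)=P$.

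For the converse and the value formula I would isolate one claim: if $f\in C_k(G^\gamma)$ and $P=l(P_f)$, then for every $j$ the coalition $l(S_f^j)\in P$ satisfies $\alpha_j$. Indeed $\oplus(\alpha_j)=l(v_j)\subseteq l(S_f^j)$; and if some $x\in\ominus(\alpha_j)$ lay in $l(S_f^j)=\bigcup_{v_i\in S_f^j}\oplus(\alpha_i)$, then $x\in\oplus(\alpha_i)$ for some $v_i\in S_f^j$, so $\oplus(\alpha_i)\cap\ominus(\alpha_j)\neq\emptyset$, giving $\{v_i,v_j\}\in E$ — impossible within the color class $S_f^j$ unless $v_i=v_j$, in which case $x\in\oplus(\alpha_j)\cap\ominus(\alpha_j)=\emptyset$, a contradiction either way. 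Applying the claim for all $j$ shows $P$ satisfies $\gamma$, completing the biconditional. For the ``Moreover'' part, the weighted-MC-nets value of $(S,P)$ under the single rule $\gamma$ is $c\cdot[S\text{ satisfies }\alpha_1]$ when $P$ satisfies $\gamma$ and $0$ otherwise, i.e. $g^\gamma(S,P)=c\cdot[S\text{ satisfies }\alpha_1]\cdot[P\text{ satisfies }\gamma]$. If $P$ does not satisfy $\gamma$, the biconditional gives no $f$ with $P=l(P_f)$, so both sides vanish. If $P$ satisfies $\gamma$, fix any $f$ with $P=l(P_f)$; by the claim $l(S_f^1)$ is the unique coalition of $P$ containing $\oplus(\alpha_1)$ and it satisfies $\alpha_1$, so $\exists f'\,(P=l(P_{f'})\wedge S=l(S_{f'}^1))$ holds iff $S$ is that coalition iff $S$ satisfies $\alpha_1$ (using $(S,P)\in EC$ and $\oplus(\alpha_1)\neq\emptyset$), which matches the value.

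I expect the only genuine obstacle to be bookkeeping: verifying that the coalition attached to an index $j$ is uniquely determined — so that $P^\ast$ is well defined and $l(S_f^1)$ does not depend on the chosen coloring — and carefully unwinding the layered definition of the value in weighted MC-nets in the degenerate case $m=1$ with a single nonzero weight. Both points reduce to the single fact that $\{\oplus(\alpha_i)\}_{i=1}^{k}$ is a partition of $N$ into nonempty parts.
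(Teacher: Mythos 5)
Your proposal is correct and follows essentially the same route as the paper's proof: construct a proper $k$-coloring from a satisfying partition via the unique coalition containing each $\oplus(\alpha_j)$, show conversely that each color class's label $l(S_f^j)$ satisfies $\alpha_j$ because negative literals of $\alpha_j$ sit in nodes adjacent to $v_j$, and then read off the value formula from the weighted MC-nets semantics with $S$ satisfying $\alpha_1$ iff $S=l(S_f^1)$. The only difference is cosmetic bookkeeping (you build the node partition $P^*$ and verify $l(P_f)=P$ explicitly, which the paper leaves implicit).
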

%\[ g^{\gamma}(S,P) = [\exists_{f \in C_k(G^{\gamma})} (P_f = P \land S_f^* = S)].\]
\begin{proof}
Assume $P = \{S_1, \dots, S_m\}$ satisfies $\gamma$.
We have that:
\begin{itemize}
\item[(i)] every expression $\alpha_i$ is satisfied by exactly one coalition (it is not possible that $\oplus(\alpha_i)$ is a subset of two non-overlapping coalitions); 
\item[(ii)] every coalition $S_r$ satisfies at least one expression (for an arbitrary player $p \in S_r$ we know that there exists an expression $\alpha_j$ such that $p \in \oplus(\alpha_j)$; hence only $S_r$ may satisfy $\alpha_j$).
\end{itemize}
Let us define a function $f$ as follows: $f(v_i) = r$ such that $\oplus(\alpha_i) \subseteq S_r$ (from (i) we know there exists exactly one such $r$).
Function $f$ is a proper coloring: if $f(v_i) = f(v_j) = r$, then coalition $S_r$ satisfies both $\alpha_i$ and $\alpha_j$, hence they are compatible and $\{v_i,v_j\} \not \in E$. 
Also, $f$ is a $k$-coloring, because $m \le k$ (from (i) and (ii)).

Now, take $k$-coloring $f$ of $G^{\gamma}$ and consider $P_f$.
Fix $v_i \in V$ and let $S \in P_f$ be the set of all players in nodes colored with the same color as node $v_i$.
We claim $S$ satisfies $\alpha_i$. 
Obviously, $\oplus(\alpha_i) \subseteq S$. 
On the other hand, $\ominus(\alpha_i) \cap S = \emptyset$, because players from $\ominus(\alpha_i)$ all appear in nodes which are neighbors of $v_i$ in graph $G^{\gamma}$, hence have different colors than $v_i$.

So far, we have proved that $P$ satisfies a hybrid rule $\gamma$ if and only if there exists a $k$-coloring $f \in C_k(G^{\gamma})$ such that $P = P_f$. 
Now, $(S,P)$ has non-zero value if and only if $P$ satisfies $\gamma$ and $S$ satisfies $\alpha_1$ which---in a partition satisfying $\gamma$---is equivalent to containing all players from nodes colored with the same color as node $v^*$. 
If $P = P_f$, then there are $\#_f$ other $k$-colorings that results in the same partition.
This proves \cref{eq:lemma:game_from_g_gamma}.
\end{proof}

\begin{example}\label{example:game_from_g_gamma}
Consider a hybrid rule $\gamma$ from \cref{example:g_gamma} with graph $G^{\gamma}$ depicted in Figure~\ref{figure:g_gamma}.
Let us discuss all possible 4-colorings of $G^{\gamma}$:
\begin{itemize}[label={--}]
\item There are no colorings of $G^{\gamma}$ that use 1 or 2 colors.
\item There are $24$ colorings that use 3 colors: $f(v_1)=f(v_4)=a$, $f(v_2)=b$ and $f(v_3)=c$ where $a,b,c \in \{1,\dots,4\}$ are different colors.
Note that for every such coloring $f$ we have $VP_f = \{\{v_1,v_4\}, \{v_2\}, \{v_3\}\}$ and $P_f = \{\{1,2,6\}, \{3,5\}, \{4\}\}$.
\item There are $24$ colorings that use 4 colors; in these colorings all nodes have different colors. For every such coloring $f$ we get $P_f = \{\{1,2\}, \{3,5\}, \{4\}, \{6\}\}$
\end{itemize}
Overall, 48 colorings results in two partitions of players.
Now, from \cref{lemma:game_from_g_gamma}, game $g^{\gamma}$ is defined as follows:
\[ g^{\gamma}(S,P) = \begin{cases}
1 & \mbox{ if }(S,P)= (\{1,2,6\}, \{\{1,2,6\}, \{3,5\}, \{4\}\}) \\
 & \mbox{ or } (S,P) = (\{1,2\}, \{\{1,2\}, \{3,5\}, \{4\}, \{6\}\}), \\
0 & \mbox{ otherwise.}
\end{cases} \]
%\[ g^{\gamma}(\{1,2,6\}, \{\{1,2,6\}, \{3,5\}, \{4\}\}) = 1, \quad g^{\gamma}(\{1,2\}, \{\{1,2\}, \{3,5\}, \{4\}, \{6\}\}) = 1. \]
This agrees with \cref{example:embedded_mcnets}.
\end{example}

\section{Computing extended Shapley values}\label{section:computing}
In this section, building upon our analysis from the previous section, we consider computing extended Shapley values in games represented as embedded and weighted MC-nets.

All extended Shapley values considered by us satisfy linearity, i.e., $ESV(g + g') = ESV(g) + ESV(g')$ and $ESV(c \cdot g) = c \cdot ESV(g)$  for every two games $g, g'$ and $c \in \mathbb{R}$.
Thus, in our computational analysis we can focus on games represented as a single rule and, based on Lemmas~\ref{lemma:hybrid_rules_weighted} and \ref{lemma:hybrid_rules_embedded}, as a single hybrid rule. 
Moreover, we can assume the weight of this rule is $1$ (i.e., $c = 1$).
Hence, from now on, we will assume that game is represented as a hybrid rule with weight $1$.

Fix such a hybrid rule $\gamma$.
From \cref{eq:extended_sv_general_form} and \cref{lemma:game_from_g_gamma} we get the following formula for extended Shapley values:
\begin{equation}\label{eq:esv_is_colorings}
ESV_i(g^{\gamma}) = \sum_{(S,P) \in EC} \left(\omega_i(S,P) \sum_{f \in C_k(G^{\gamma}) : (S_f^*,P_f)=(S,P)} \frac{1}{\#_f} \right) = \sum_{f \in C_k(G^{\gamma})} \frac{\omega_i(S_f^*,P_f)}{\#_f}.
\end{equation}
To put it in words, extended Shapley value in game $g^{\gamma}$ is a weighted sum over all colorings in graph $G^{\gamma}$.
Weights depend on $P_f$ (partition of players resulting from the coloring $f$), $S_f^*$ (set of players from labels of nodes colored with the same color as node $v^*$), and player $i \in N$.
For an extensive example, see Table~\ref{table:big_example}.

\begin{table*}[p]
\def\arraystretch{1.1}
\setcellgapes{2pt}
\makegapedcells
\setlength{\tabcolsep}{4.5pt}
\centering
\scalebox{0.67}{
\begin{tabular}{|c|c|c|c|c|c|c|c|c|c|}
\hhline{|=|=|=|=|=|=====|}
\multirow{2}{*}{$f$} & \multirow{2}{*}{$|P_f|$} & \multirow{2}{*}{$\#_f$} & \multirow{2}{*}{$S_f^*$} & \multirow{2}{*}{$P_f \setminus \{S_f^*\}$} & \multicolumn{5}{c|}{$\omega_1(S_f^*, P_f)$}\\\cline{6-10}
 & &  &  &  & MQ & EF & HY & SS & MY\\
\hhline{|=|=|=|=|=|=|=|=|=|=|}
\begin{tikzpicture}[x=1.2cm,baseline=(current bounding box.center)]
  \node[main node, color1, minimum size=1.01cm] (1) at (0,0) {};
  \node[main node, color1] (1) at (0,0) {$1$};
  \node[main node, color2] (2) at (1,0) {$2$};
  \node[main node, color1] (3) at (2,0) {$3$};
  \node[main node, color1] (4) at (3,0) {$4$,$6$};
  \node[main node, color2] (5) at (4,0) {$5$};

  \path[draw,thick] (1) edge (2) (2) edge (3) (4) edge (5);
\end{tikzpicture} 
& 2 & 20 & $\{1,\!3,\!4,\!6\}$ & $\{\{2,5\}\}$ & $1$ & 0 & $52$ & 6 & 5\\
\hline
\begin{tikzpicture}[x=1.2cm,baseline=(current bounding box.center)]
  \node[main node, color1, minimum size=1.01cm] (1) at (0,0) {};
  \node[main node, color1] (1) at (0,0) {$1$};
  \node[main node, color2] (2) at (1,0) {$2$};
  \node[main node, color1] (3) at (2,0) {$3$};
  \node[main node, color2] (4) at (3,0) {$4$,$6$};
  \node[main node, color1] (5) at (4,0) {$5$};

  \path[draw,thick] (1) edge (2) (2) edge (3) (4) edge (5);
\end{tikzpicture} 
& 2 & 20 & $\{1,3,5\}$ & $\{\{2,4,6\}\}$ & $1$ & 0 & $15$ & 4 & 10\\
\hhline{|=|=|=|=|=|=|=|=|=|=|}
\begin{tikzpicture}[x=1.2cm,baseline=(current bounding box.center)]
  \node[main node, color1, minimum size=1.01cm] (1) at (0,0) {};
  \node[main node, color1] (1) at (0,0) {$1$};
  \node[main node, color2] (2) at (1,0) {$2$};
  \node[main node, color1] (3) at (2,0) {$3$};
  \node[main node, color1] (4) at (3,0) {$4$,$6$};
  \node[main node, color3] (5) at (4,0) {$5$};

  \path[draw,thick] (1) edge (2) (2) edge (3) (4) edge (5);
\end{tikzpicture} 
& 3 & 60 & $\{1,\!3,\!4,\!6\}$ & $\{\{2\}, \{5\}\}$ & 0 & $1$ & $151$ & $6$ & -4\\
\hline
\begin{tikzpicture}[x=1.2cm,baseline=(current bounding box.center)]
  \node[main node, color1, minimum size=1.01cm] (1) at (0,0) {};
  \node[main node, color1] (1) at (0,0) {$1$};
  \node[main node, color2] (2) at (1,0) {$2$};
  \node[main node, color1] (3) at (2,0) {$3$};
  \node[main node, color3] (4) at (3,0) {$4$,$6$};
  \node[main node, color1] (5) at (4,0) {$5$};

  \path[draw,thick] (1) edge (2) (2) edge (3) (4) edge (5);
\end{tikzpicture} 
& 3 & 60 & $\{1,3,5\}$ & $\{\{2\}, \{4,6\}\}$ & 0 & 0 & $37$ & $2$ & -7\\
\hline
\begin{tikzpicture}[x=1.2cm,baseline=(current bounding box.center)]
  \node[main node, color1, minimum size=1.01cm] (1) at (0,0) {};
  \node[main node, color1] (1) at (0,0) {$1$};
  \node[main node, color2] (2) at (1,0) {$2$};
  \node[main node, color3] (3) at (2,0) {$3$};
  \node[main node, color1] (4) at (3,0) {$4$,$6$};
  \node[main node, color2] (5) at (4,0) {$5$};

  \path[draw,thick] (1) edge (2) (2) edge (3) (4) edge (5);
\end{tikzpicture} 
& 3 & 60 & $\{1,4,6\}$ & $\{\{2,5\}, \{3\}\}$ & 0 & 0 & $37$ & $2$ & -7\\
\hline
\begin{tikzpicture}[x=1.2cm,baseline=(current bounding box.center)]
  \node[main node, color1, minimum size=1.01cm] (1) at (0,0) {};
  \node[main node, color1] (1) at (0,0) {$1$};
  \node[main node, color2] (2) at (1,0) {$2$};
  \node[main node, color3] (3) at (2,0) {$3$};
  \node[main node, color1] (4) at (3,0) {$4$,$6$};
  \node[main node, color3] (5) at (4,0) {$5$};

  \path[draw,thick] (1) edge (2) (2) edge (3) (4) edge (5);
\end{tikzpicture} 
& 3 & 60 & $\{1,4,6\}$ & $\{\{2\}, \{3,5\}\}$ & 0 & 0 & $37$ & $2$ & -7\\
\hline
\begin{tikzpicture}[x=1.2cm,baseline=(current bounding box.center)]
  \node[main node, color1, minimum size=1.01cm] (1) at (0,0) {};
  \node[main node, color1] (1) at (0,0) {$1$};
  \node[main node, color2] (2) at (1,0) {$2$};
  \node[main node, color1] (3) at (2,0) {$3$};
  \node[main node, color3] (4) at (3,0) {$4$,$6$};
  \node[main node, color2] (5) at (4,0) {$5$};

  \path[draw,thick] (1) edge (2) (2) edge (3) (4) edge (5);
\end{tikzpicture} 
& 3 & 60 & $\{1,3\}$ & $\{\{2,5\}, \{4,6\}\}$ & 0 & 0 & $20$ & $1$ & -10\\
\hline
\begin{tikzpicture}[x=1.2cm,baseline=(current bounding box.center)]
  \node[main node, color1, minimum size=1.01cm] (1) at (0,0) {};
  \node[main node, color1] (1) at (0,0) {$1$};
  \node[main node, color2] (2) at (1,0) {$2$};
  \node[main node, color1] (3) at (2,0) {$3$};
  \node[main node, color2] (4) at (3,0) {$4$,$6$};
  \node[main node, color3] (5) at (4,0) {$5$};

  \path[draw,thick] (1) edge (2) (2) edge (3) (4) edge (5);
\end{tikzpicture} 
& 3 & 60 & $\{1,3\}$ & $\{\{2,4,6\}, \{5\}\}$ & 0 & 0 & $20$ & $2$ & -12\\
\hline
\begin{tikzpicture}[x=1.2cm,baseline=(current bounding box.center)]
  \node[main node, color1, minimum size=1.01cm] (1) at (0,0) {};
  \node[main node, color1] (1) at (0,0) {$1$};
  \node[main node, color2] (2) at (1,0) {$2$};
  \node[main node, color3] (3) at (2,0) {$3$};
  \node[main node, color2] (4) at (3,0) {$4$,$6$};
  \node[main node, color1] (5) at (4,0) {$5$};

  \path[draw,thick] (1) edge (2) (2) edge (3) (4) edge (5);
\end{tikzpicture} 
& 3 & 60 & $\{1,5\}$ & $\{\{2,4,6\}, \{3\}\}$ & 0 & 0 & $20$ & $2$ &  -12\\
\hline
\begin{tikzpicture}[x=1.2cm,baseline=(current bounding box.center)]
  \node[main node, color1, minimum size=1.01cm] (1) at (0,0) {};
  \node[main node, color1] (1) at (0,0) {$1$};
  \node[main node, color2] (2) at (1,0) {$2$};
  \node[main node, color3] (3) at (2,0) {$3$};
  \node[main node, color3] (4) at (3,0) {$4$,$6$};
  \node[main node, color1] (5) at (4,0) {$5$};

  \path[draw,thick] (1) edge (2) (2) edge (3) (4) edge (5);
\end{tikzpicture} 
& 3 & 60 & $\{1,5\}$ & $\{\{2\}, \{3,4,6\}\}$ & 0 & 0 & $20$ & $2$ & -12\\
\hline
\begin{tikzpicture}[x=1.2cm,baseline=(current bounding box.center)]
  \node[main node, color1, minimum size=1.02cm] (1) at (0,0) {};
  \node[main node, color1] (1) at (0,0) {$1$};
  \node[main node, color2] (2) at (1,0) {$2$};
  \node[main node, color3] (3) at (2,0) {$3$};
  \node[main node, color2] (4) at (3,0) {$4$,$6$};
  \node[main node, color3] (5) at (4,0) {$5$};

  \path[draw,thick] (1) edge (2) (2) edge (3) (4) edge (5);
\end{tikzpicture} 
& 3 & 60 & $\{1\}$ & $\{\{2,4,6\}, \{3,5\}\}$ & 0 & 0 & $30$ & $2$ & -15\\
\hline
\begin{tikzpicture}[x=1.2cm,baseline=(current bounding box.center)]
  \node[main node, color1, minimum size=1.01cm] (1) at (0,0) {};
  \node[main node, color1] (1) at (0,0) {$1$};
  \node[main node, color2] (2) at (1,0) {$2$};
  \node[main node, color3] (3) at (2,0) {$3$};
  \node[main node, color3] (4) at (3,0) {$4$,$6$};
  \node[main node, color2] (5) at (4,0) {$5$};

  \path[draw,thick] (1) edge (2) (2) edge (3) (4) edge (5);
\end{tikzpicture} 
& 3 & 60 & $\{1\}$ & $\{\{2,5\}, \{3,4,6\}\}$ & 0 & 0 & $30$ & $2$ & -15\\
\hhline{|=|=|=|=|=|=|=|=|=|=|}
\begin{tikzpicture}[x=1.2cm,baseline=(current bounding box.center)]
  \node[main node, color1, minimum size=1.01cm] (1) at (0,0) {};
  \node[main node, color1] (1) at (0,0) {$1$};
  \node[main node, color2] (2) at (1,0) {$2$};
  \node[main node, color3] (3) at (2,0) {$3$};
  \node[main node, color1] (4) at (3,0) {$4$,$6$};
  \node[main node, color4] (5) at (4,0) {$5$};

  \path[draw,thick] (1) edge (2) (2) edge (3) (4) edge (5);
\end{tikzpicture} 
& 4 & 120 & $\{1,4,6\}$ & $\{\{2\}, \{3\}, \{5\}\}$ & 0 & $1$ & $77$ & $2$ & 12\\
\hline
\begin{tikzpicture}[x=1.2cm,baseline=(current bounding box.center)]
  \node[main node, color1, minimum size=1.01cm] (1) at (0,0) {};
  \node[main node, color1] (1) at (0,0) {$1$};
  \node[main node, color2] (2) at (1,0) {$2$};
  \node[main node, color1] (3) at (2,0) {$3$};
  \node[main node, color3] (4) at (3,0) {$4$,$6$};
  \node[main node, color4] (5) at (4,0) {$5$};

  \path[draw,thick] (1) edge (2) (2) edge (3) (4) edge (5);
\end{tikzpicture} 
& 4 & 120 & $\{1,3\}$ & $\{\{2\}, \{4,6\}, \{5\}\}$ & 0 & 0 & $34$ & $1$ & 18\\
\hline
\begin{tikzpicture}[x=1.2cm,baseline=(current bounding box.center)]
  \node[main node, color1, minimum size=1.01cm] (1) at (0,0) {};
  \node[main node, color1] (1) at (0,0) {$1$};
  \node[main node, color2] (2) at (1,0) {$2$};
  \node[main node, color3] (3) at (2,0) {$3$};
  \node[main node, color4] (4) at (3,0) {$4$,$6$};
  \node[main node, color1] (5) at (4,0) {$5$};

  \path[draw,thick] (1) edge (2) (2) edge (3) (4) edge (5);
\end{tikzpicture} 
& 4 & 120 & $\{1,5\}$ & $\{\{2\}, \{3\}, \{4,6\}\}$ & 0 & 0 & $34$ & $1$ & 18\\
\hline
\begin{tikzpicture}[x=1.2cm,baseline=(current bounding box.center)]
  \node[main node, color1, minimum size=1.01cm] (1) at (0,0) {};
  \node[main node, color1] (1) at (0,0) {$1$};
  \node[main node, color2] (2) at (1,0) {$2$};
  \node[main node, color3] (3) at (2,0) {$3$};
  \node[main node, color4] (4) at (3,0) {$4$,$6$};
  \node[main node, color2] (5) at (4,0) {$5$};

  \path[draw,thick] (1) edge (2) (2) edge (3) (4) edge (5);
\end{tikzpicture} 
& 4 & 120 & $\{1\}$ & $\{\{2,5\}, \{3\}, \{4,6\}\}$ & 0 & 0 & $40$ & $1$ & 24\\
\hline
\begin{tikzpicture}[x=1.2cm,baseline=(current bounding box.center)]
  \node[main node, color1, minimum size=1.01cm] (1) at (0,0) {};
  \node[main node, color1] (1) at (0,0) {$1$};
  \node[main node, color2] (2) at (1,0) {$2$};
  \node[main node, color3] (3) at (2,0) {$3$};
  \node[main node, color4] (4) at (3,0) {$4$,$6$};
  \node[main node, color3] (5) at (4,0) {$5$};

  \path[draw,thick] (1) edge (2) (2) edge (3) (4) edge (5);
\end{tikzpicture} 
& 4 & 120 & $\{1\}$ & $\{\{2\}, \{3,5\}, \{4,6\}\}$ & 0 & 0 & $40$ & $1$ & 24\\
\hline
\begin{tikzpicture}[x=1.2cm,baseline=(current bounding box.center)]
  \node[main node, color1, minimum size=1.01cm] (1) at (0,0) {};
  \node[main node, color1] (1) at (0,0) {$1$};
  \node[main node, color2] (2) at (1,0) {$2$};
  \node[main node, color3] (3) at (2,0) {$3$};
  \node[main node, color2] (4) at (3,0) {$4$,$6$};
  \node[main node, color4] (5) at (4,0) {$5$};

  \path[draw,thick] (1) edge (2) (2) edge (3) (4) edge (5);
\end{tikzpicture} 
& 4 & 120 & $\{1\}$ & $\{\{2,4,6\}, \{3\}, \{5\}\}$ & 0 & 0 & $40$ & $2$ & 28\\
\hline
\begin{tikzpicture}[x=1.2cm,baseline=(current bounding box.center)]
  \node[main node, color1, minimum size=1.01cm] (1) at (0,0) {};
  \node[main node, color1] (1) at (0,0) {$1$};
  \node[main node, color2] (2) at (1,0) {$2$};
  \node[main node, color3] (3) at (2,0) {$3$};
  \node[main node, color3] (4) at (3,0) {$4$,$6$};
  \node[main node, color4] (5) at (4,0) {$5$};

  \path[draw,thick] (1) edge (2) (2) edge (3) (4) edge (5);
\end{tikzpicture} 
& 4 & 120 & $\{1\}$ & $\{\{2\}, \{3,4,6\}, \{5\}\}$ & 0 & 0 & $40$ & $2$ & 28\\
\hhline{|=|=|=|=|=|=|=|=|=|=|}
\begin{tikzpicture}[x=1.2cm,baseline=(current bounding box.center)]
  \node[main node, color1, minimum size=1.01cm] (1) at (0,0) {};
  \node[main node, color1] (1) at (0,0) {$1$};
  \node[main node, color2] (2) at (1,0) {$2$};
  \node[main node, color3] (3) at (2,0) {$3$};
  \node[main node, color4] (4) at (3,0) {$4$,$6$};
  \node[main node, color5] (5) at (4,0) {$5$};

  \path[draw,thick] (1) edge (2) (2) edge (3) (4) edge (5);
\end{tikzpicture} 
& 5 & 120 & $\{1\}$ & $\{\{2\},\!\{3\},\!\{4,\!6\},\!\{5\}\}$ & 0 & 0 & $40$ & $1$ & -66\\
\hhline{|=|=|=|=|=|=|=|=|=|=|}
 & & & & & $\times\!\frac{1}{60}$ & $\times\!\frac{1}{60}$ & $\!\!\times\!\frac{1}{12180}\!\!$ & $\!\!\times\!\frac{1}{7200}\!\!$ & $\times\!\frac{1}{60}$\\
\hhline{|=|=|=|=|=|=|=|=|=|=|}
\end{tabular}
}
\caption{Example of weights associated with all possible colorings in a simple graph according to extended Shapley values. 
Graph $G = (V,E)$ has $V = \{v_1, \dots, v_5\}$ and $E = \{\{v_1, v_2\}, \{v_2, v_3\}, \{v_4, v_5\}\}$ and labels: $l(v_i) = \{i\}$ for $i \in \{1,2,3,5\}$ and $l(v_4) = \{4,6\}$. 
We consider all $5$-colorings $f: V \rightarrow \{1,\dots,5\}$ (isomorphic colorings are grouped). 
For each value, the last row contains a common multiplier.}
\label{table:big_example}
\end{table*}

More generally, we can consider the following counting problem that we name \textsc{Weighted Coloring Counting}.
The problem is parametrized with weights $\tilde{\omega}: C_{|V|}(G) \rightarrow \mathbb{R}$ that for each coloring assigns some real value.
\begin{definition}\label{definition:weighted_coloring} 
\textsc{$\tilde{\omega}$-Weighted Coloring Counting}\\
\emph{Input:} player-graph $G = (V,E,l,v^*)$, i.e., undirected graph $(V,E)$, label function $l: V \rightarrow 2^N$ s.t. labels of nodes form a partition of $N$ and $v^* \in V$\\
\emph{Output:} $\sum_{f \in C_{|V|}(G)} \tilde{\omega}(f)$.
\end{definition}

\noindent In general, this problem is computationally challenging, as it generalizes the problem of counting all $k$-colorings which is \#P-hard and allows us to determine whether a graph is 3-colorable which is NP-complete.
%However, for some weights, the problem can be done in polynomial time.

Based on \cref{eq:esv_is_colorings}, computing each extended Shapley value for a fixed player can be considered a special case of \textsc{Weighted Coloring Counting}.
In the following sections, we analyze these problems one by one.
Extended Shapley values are ordered in ascending order by the complexity of their formula:
\begin{itemize}
\item First two values, the MQ-value and the EF-value, take into account only one partition $P$ for every coalition $S$; hence, they can be computed by traversing all subsets, not all partitions of players. 
\item In the HY-value, considered third, the weight of an embedded coalition $(S,P)$ depends solely on $|S|$ and $|P|$; this allows us to group all colorings that use the same number of colors.
\item Finally, in the last two values, the SS-value and the MY-value, weights depend on sizes of all coalitions in a partition.
\end{itemize}

Before we move to the next section, let us roughly explain a technique that we use in the proofs of \cref{theorem:ef_weighted,theorem:hy_embedded,theorem:my_embedded}. This technique was used in several complexity results for the Shapley value in games without externalities (see, e.g., \cite{Aziz:etal:2009:power,Michalak:etal:2013:connectivity}).

Assume we want to compute $x_1, \dots, x_k$ and we have an algorithm that computes the sum $f(j) = \sum_{m=1}^k a_{j,m} x_m$ for some weights $(a_{j,m})_{1 \le j,m \le k}$ that depend on $m$ and some external parameter $j \in \{1,\dots,k\}$. 
To this end, we can construct a system of linear equations with the following matrix form:
\begin{equation}\label{eq:technique}
\begin{bmatrix}
a_{1,1} & a_{1,2} & \dots & a_{1,k} \\
a_{2,1} & a_{2,2} & \dots & a_{2,k} \\
\vdots & \vdots & \ddots & \vdots \\
a_{k,1} & a_{k,2} & \dots & a_{k,k} \\
\end{bmatrix} 
\cdot 
\begin{bmatrix}
x_1 \\
x_2 \\
\vdots \\
x_k
\end{bmatrix} \\
=
\begin{bmatrix}
f(1) \\
f(2) \\
\vdots \\
f(k)
\end{bmatrix}
\end{equation}
Now, if the matrix $(a_{j,m})_{1 \le j,m \le k}$ has non-zero determinant, then it is invertible. 
Hence, if we know $f(1), \dots, f(k)$, then using Gaussian elimination we can compute $x_1, \dots, x_k$.

In our case, $f(j)$ will be an extended Shapley value and $x_m$ will be the number of independent sets of size $m$ (\cref{theorem:ef_weighted}), $k$-colorings that use $m$ colors (\cref{theorem:hy_embedded}) or matchings in a bipartite graph of size $m$ (\cref{theorem:my_embedded}). 
Hence, based on the fact that computing $\sum_{m=1}^k x_m$ is \#P-hard we will get that computing the EF-value, the HY-value and the MY-value is also \#P-hard. 

Note that based on \cref{eq:esv_is_colorings} each extended Shapley value is a sum over exponentially many colorings and, in general, two colorings that result in different partitions of nodes may have different weights.
Hence, the main challenge with this approach is to (1) express an extended Shapley value as a weighted sum over polynomial number of elements and (2) to create a system of linear equations that results in a matrix which is invertible.

%%%%%%%%%%%%%%%%%%%%%%%%%%%%%%%%%%%%%%%%%%%%%%%%%%%%%%%%%%%
%%%%%%%%%%%%%%           MCQUILLIN           %%%%%%%%%%%%%%
%%%%%%%%%%%%%%%%%%%%%%%%%%%%%%%%%%%%%%%%%%%%%%%%%%%%%%%%%%%
\subsection{Computing the MQ-value}
We begin with the analysis of the MQ-value.
Combining \cref{eq:mq_value,eq:lemma:game_from_g_gamma} as in \cref{eq:esv_is_colorings} we get:
\[ MQ_i(g^{\gamma}) = \sum_{f \in C_k(G^{\gamma})}  \frac{\zeta_i(S_f^*)}{\#_f} [|P_f| \leq 2].\]
As we can see, only colorings that use $1$ or $2$ colors have non-zero weights. 
All such colorings that result in the same partition of players can be grouped and their weights sum up to $\zeta_i(S^*_f)$. 
Hence, we can go through all $2$-colorings instead of $k$-colorings.
By observing that there are two $2$-colorings that result in the same partition we get:
\begin{equation}\label{eq:mq_final}
MQ_i(g^{\gamma}) = \frac{1}{2} \sum_{f \in C_2(G^{\gamma})} \zeta_i(S_f^*).
\end{equation}

While in a connected graph there are at most two $2$-colorings, in a disconnected graph it can be exponentially many.
Nevertheless, in \cref{theorem:mq_weighted}, we show that this sum can be easily computed in polynomial time for every graph.

\begin{theorem}\label{theorem:mq_weighted}
For a game represented as weighted MC-nets, the MQ-value can be computed in polynomial time.
\end{theorem}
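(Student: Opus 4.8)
The plan is to build directly on Eq.~\eqref{eq:mq_final}. By linearity of $ESV$ (Eq.~\eqref{eq:extended_sv_general_form}) together with Lemma~\ref{lemma:hybrid_rules_weighted}, a weighted MC-nets game decomposes into polynomially many hybrid rules, and it suffices to compute $MQ_i(g^{\gamma})$ for a single hybrid rule $\gamma$ of weight $1$; linearity again lets us scale by an arbitrary weight afterwards. For such a $\gamma$, Eq.~\eqref{eq:mq_final} states $MQ_i(g^{\gamma})=\tfrac12\sum_{f\in C_2(G^{\gamma})}\zeta^i_{l(S_f^1)}$, so the whole task reduces to evaluating this sum over the (possibly exponentially many) $2$-colorings of $G^{\gamma}$ in polynomial time.

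First I would analyze the structure of $2$-colorings. Split $G^{\gamma}$ into its connected components $G_1,\dots,G_c$ (in linear time), with $v_1\in G_1$. If some $G_j$ is not bipartite, then $C_2(G^{\gamma})=\emptyset$ and $MQ_i(g^{\gamma})=0$. Otherwise each $G_j$ has a unique bipartition $(A_j,B_j)$, where we fix $v_1\in A_1$, and exactly two proper $2$-colorings, so $|C_2(G^{\gamma})|=2^{c}$. In any $f\in C_2(G^{\gamma})$, the portion of $v_1$'s color class lying in $G_1$ is always $A_1$ (regardless of whether $v_1$ gets color $1$ or $2$), while inside each $G_j$ with $j\ge 2$ it is either $A_j$ or $B_j$. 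Writing $X_j(0)=A_j$ and $X_j(1)=B_j$, the set of players in $v_1$'s color class is therefore $S(b):=l(A_1)\cup\bigcup_{j\ge 2}l(X_j(b_j))$, indexed by a bit vector $b=(b_2,\dots,b_c)\in\{0,1\}^{c-1}$, and each $b$ arises from exactly two colorings (related by the global swap $1\leftrightarrow 2$) that share the same $S(b)$. Combining this with Eq.~\eqref{eq:mq_final},
\[
MQ_i(g^{\gamma})\;=\;\tfrac12\cdot 2\!\!\sum_{b\in\{0,1\}^{c-1}}\!\!\zeta^i_{S(b)}\;=\;\sum_{b\in\{0,1\}^{c-1}}\zeta^i_{S(b)}.
\]
Since $\{l(v):v\in V\}$ partitions $N$, we have $|S(b)|=w(A_1)+\sum_{j\ge 2}w(X_j(b_j))$ with $w(U):=\sum_{v\in U}|l(v)|$, and the event $i\in S(b)$ is controlled by the single component $G_{j(i)}$ containing the unique node whose label contains $i$: it is a constant when $j(i)=1$, and otherwise it amounts to fixing $b_{j(i)}$ to one prescribed value. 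In particular $\zeta^i_{S(b)}$ depends on $b$ only through $|S(b)|$ and, at most, this one coordinate.

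It then remains to count bit vectors by the value of $|S(b)|$. Let $N^{+}_s$ (resp.\ $N^{-}_s$) be the number of $b\in\{0,1\}^{c-1}$ with $|S(b)|=s$ and $i\in S(b)$ (resp.\ $i\notin S(b)$); then
\[
MQ_i(g^{\gamma})=\sum_{s=1}^{n}\frac{(s-1)!\,(n-s)!}{n!}\,N^{+}_s\;-\;\sum_{s=1}^{n-1}\frac{s!\,(n-s-1)!}{n!}\,N^{-}_s .
\]
The numbers $N^{\pm}_s$ are computed by a straightforward dynamic program over the components $G_2,\dots,G_c$: with $D_1[0]=1$ and $D_j[s]=D_{j-1}[s-w(A_j)]+D_{j-1}[s-w(B_j)]$, where for $j=j(i)$ only the transition matching the prescribed value of $b_{j(i)}$ is taken for the ``$+$'' table and only the other for the ``$-$'' table; afterwards one shifts indices by $w(A_1)$ (and, if $j(i)=1$, routes the single resulting table entirely into $N^{+}$ or into $N^{-}$ according to whether $i\in l(A_1)$). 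This DP has $O(|V|\cdot n)$ cells, every entry is a nonnegative integer at most $2^{|V|}$ hence of polynomial bit-length, and all factorials up to $n!$ also have polynomial bit-length, so the whole computation --- and the final summation over all rules and their hybrid expansions --- runs in polynomial time.

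The main obstacle is getting the bookkeeping right rather than any deep difficulty: one must observe that $\theta(f)=2$ for every relevant coloring makes the sum factor over connected components, that $v_1$'s color class is rigid on $G_1$ but free on every other component, and that the indicator $[\,i\in S(b)\,]$ collapses to fixing at most one coordinate of $b$. Once these observations are in place, the remaining counting is a textbook subset-sum dynamic program. A minor point worth checking is that $l(A_1)\supseteq\oplus(\alpha_1)\neq\emptyset$, so $|S(b)|\ge 1$ always and the appearances of $(s-1)!$ and $(n-s-1)!$ are well-defined (the latter only enters when $i\notin S(b)$, which forces $s<n$).
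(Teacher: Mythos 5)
Your proof is correct and follows essentially the same route as the paper's: decompose $G^{\gamma}$ into connected components with their unique bipartitions, run a subset-sum dynamic program over the components tracking the size of $v_1$'s color class, and split the count according to whether player $i$'s node is colored like $v_1$ (your tables $N^{+}_s,N^{-}_s$ correspond to the paper's $T$, $T_{=}$ and $T_{\neq}$). No gaps; the bookkeeping, including the non-bipartite zero case and the cancellation of the factor $\tfrac12$ by the global color swap, matches the paper's argument.
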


\begin{proof}
From \cref{lemma:hybrid_rules_weighted} we know that it is enough to show that the MQ-value can be computed in polynomial time for a game represented as a single hybrid rule.

Fix a hybrid rule $\gamma$.
Assume $i \in l(u)$ for some $u \in V$.
From \cref{eq:mq_final} we have:
\[
MQ_i(g^{\gamma}) = \sum_{\substack{f \in C_2(G^{\gamma})\\ f(v^*) = f(u)}} \frac{(|S_f^*|-1)!(n-|S_f^*|)!}{2(n!)} - \sum_{\substack{f \in C_2(G^{\gamma})\\ f(v^*) \neq f(u)}} \frac{|S_f^*|!(n-|S_f^*|-1)!}{2(n!)}.
\]
Let us define two tables $T_{=}[1{\dots}n]$ and $T_{\neq}[1{\dots}n]$ as follows:
\[ T_{=}[s] = |\{f \in C_2(G^{\gamma}) : |S_f^*| = s, f(v^*) = f(u) \}|. \]
\[ T_{\neq}[s] = |\{f \in C_2(G^{\gamma}) : |S_f^*| = s, f(v^*) \neq f(u) \}|. \]
To put it in words, for $s \in \{1,\dots,n\}$, $T_{=}[s] + T_{\neq}[s]$ is the number of $2$-colorings in which there are $s$ players in nodes colored with the same color as node $v^*$. 
Now, $T_{=}[s]$ counts only these colorings in which $u$ is colored with the same color as $v^*$, and $T_{\neq}[s]$---with different color than $v^*$.
See Figure~\ref{figure:mq_example} for an illustration.
We have now:
\[ 
MQ_i(g^{\gamma}) = \sum_{s=1}^n \frac{(s-1)!(n-s)!}{2n!} T_{=}[s] - \sum_{s=1}^n \frac{s!(n-s-1)!}{2n!} T_{\neq}[s].
\]

\begin{figure}[t]
\centering
\begin{tikzpicture}[x=5cm,y=5cm]
  \node[main node,color1,minimum size=1.01cm] (1) at (0.00,1.12) {};
  \node[main node,color1] (1) at (0.00, 1.12) {$1,3$}; 
  \node[main node,color2] (2) at (0.00, 0.80) {$5$}; 
  \node[main node,color1] (3) at (0.40, 0.80) {$2$}; 
  \node[main node,color2] (4) at (0.55, 1.11) {$6$}; 
  \node[main node,color1] (5) at (0.70, 0.79) {$4,8$}; 
  \node[main node,color1] (6) at (1.10, 1.12) {$7$}; 

  \path[draw,thick] 
  (1) edge (2) 
  (3) edge (4) 
  (4) edge (5)
  ;
\end{tikzpicture}
\caption{There are eight $2$-colorings of this graph. Fix $i=6$. 
We have $T_{=} = [0,0,2,2,0,0,0,0]$ and $T_{\neq} = [0,0,0,0,2,2,0,0]$. 
For the presented coloring $f$ it holds $|S_f^*|=6$ and $f(v^*) \neq f(u)$ for $u$ s.t. $i \in l(u)$, hence $f$ is counted in $T_{\neq}[6]$.}
\label{figure:mq_example}
\end{figure}

It remains to determine tables $T_{=}$ and $T_{\neq}$.
Let us state some basic facts about $2$-colorings in a graph.
Graph is $2$-colorable if and only if it is bipartite, i.e., nodes can be partitioned into two groups $V = V_1 \dot\cup V_2$, such that $V_1$ and $V_2$ are independent sets.
If a bipartite graph is connected (i.e., there exists a path between any two nodes), then there exist a unique such partition. 
It can be found by performing a breadth-first search from any node, $v \in V$, and putting all nodes at even distance from $v$ in set $V_1$ and all nodes at odd distance---in set $V_2$. 
Note that creating such a partition and checking whether both groups are independent sets is also  a good way of checking whether the graph is $2$-colorable.
Now, there are two $2$-colorings: in the first one nodes from $V_1$ are colored with color $1$, and in the second one---with color $2$.

On the other hand, if the graph is not connected, then it is $2$-colorable if its every connected component (i.e., maximal subset of nodes such that there exists a path between every pair of nodes) is $2$-colorable. 
In such a case, for each connected component there exists a unique partition into independent sets. 
However, in the whole graph there may be an exponential number of $2$-colorings.

Let $C_1, \dots, C_m$ be connected components of graph $G^{\gamma}$, and let $\{A_j, B_j\}$ be a partition of $C_j$ into two independent sets.
If partition $\{A_j, B_j\}$ does not exist for at least one component $C_j$, then there are no 2-colorings and $T_{=}$ and $T_{\neq}$ have only zeros.
Assume otherwise and without loss of generality assume $v^* \in A_1$.
Let $T_{=}$ and $T_{\neq}$ be filled with zeros. 
We initiate both tables depending on the position of node $u$:
\begin{itemize}
\item if $u \in A_1$, then $T_{=}[|l(A_1)|] = 2$;
\item if $u \in B_1$, then $T_{\neq}[|l(A_1)|] = 2$;
\item otherwise, without loss of generality let us assume that $u \in A_2$; hence, $T_{=}[|l(A_1)|+|l(A_2)|] = T_{\neq}[|l(A_1)|+|l(B_2)|] = 2$.
\end{itemize}
Now, we consider other components $(C_2), C_3, \dots, C_m$, one by one, and for each component $C_j$ consider two cases: either $A_j$ or $B_j$ is colored with the same color as node $v_1$. 
Thus, in each step, we update each table $T_{*}$ ($T_{=}$ or $T_{\neq}$) by replacing it with a new table $T'_{*}$ defined as follows:
\begin{equation}\label{eq:proof_mq_a}
T'_{*}[s] = T_{*}[s-|l(A_j)|] + T_{*}[s-|l(B_j)|] \text{ for }1 \le s \le n,
\end{equation}
assuming $T_{*}[s] = 0$ for $s \le 0$. 
After analyzing the $m$-th component, the calculation is complete.
This concludes the proof.
\end{proof}

\cref{theorem:mq_weighted} implies polynomial computation also for embedded MC-nets.

\begin{corollary}\label{corollary:mq_embedded}
For a game represented as embedded MC-nets, the MQ-value can be computed in polynomial time.
\end{corollary}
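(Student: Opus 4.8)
The plan is to reduce the embedded-MC-nets case directly to the weighted-MC-nets case, which is already settled by Theorem~\ref{theorem:mq_weighted}. Given a game $g^{\Gamma}$ represented by a set $\Gamma = \{\gamma_1,\dots,\gamma_k\}$ of embedded MC-nets rules, I would apply Lemma~\ref{lemma:hybrid_rules_embedded} to each rule $\gamma_j$ individually, obtaining an equivalent hybrid rule $\gamma_j'$ of size $\mathrm{poly}(n,|\gamma_j|)$ that moreover satisfies condition $(*)$. Since by Definition~\ref{definition:simple} every hybrid rule is a weighted MC-nets rule, the set $\Gamma' = \{\gamma_1',\dots,\gamma_k'\}$ is a weighted MC-nets representation, and by linearity of the game value (the value of a coalition is the sum of contributions of the rules) we have $g^{\Gamma'} = g^{\Gamma}$.

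The only thing to verify about $\Gamma'$ is that it has size polynomial in the size of $\Gamma$: its total size is $\sum_{j=1}^k \mathrm{poly}(n,|\gamma_j|) \le k \cdot \mathrm{poly}(n,|\Gamma|) = \mathrm{poly}(n,|\Gamma|)$, and the per-rule transformation of Lemma~\ref{lemma:hybrid_rules_embedded} is itself carried out in polynomial time. With $\Gamma'$ in hand, I would simply invoke Theorem~\ref{theorem:mq_weighted}: it computes $MQ_i(g^{\Gamma'})$ in time polynomial in $|\Gamma'|$, hence polynomial in $|\Gamma|$, and $MQ_i(g^{\Gamma'}) = MQ_i(g^{\Gamma})$ for every player $i$.

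I do not expect a genuine obstacle here; the corollary is essentially a bookkeeping consequence of Lemma~\ref{lemma:hybrid_rules_embedded} and Theorem~\ref{theorem:mq_weighted}, the only routine checks being the polynomiality of the conversion and of the resulting representation. As an alternative (slightly less clean) route one could avoid the reduction entirely: by Lemma~\ref{lemma:g_gamma_embedded} each graph $G^{\gamma_j'}$ satisfies conditions (i)--(iii), and the dynamic-programming argument described in the sketch of Theorem~\ref{theorem:mq_weighted} applies verbatim to these graphs, so the table $T$ and hence $MQ_i(g^{\gamma_j'})$ can be computed directly; summing over $j$ then gives $MQ_i(g^{\Gamma})$. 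I would present the reduction-based proof, as it is the shortest.
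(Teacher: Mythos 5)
Your proof is correct and matches the paper's approach: the paper's own justification is simply ``directly from Theorem~\ref{theorem:mq_weighted}'', with the implicit conversion of embedded MC-nets rules to hybrid (hence weighted MC-nets) rules via Lemma~\ref{lemma:hybrid_rules_embedded}, which is exactly the reduction you spell out. The extra bookkeeping you include (polynomial size of the converted representation, linearity across rules) is the intended content of that one-line proof.
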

\begin{proof}
Directly from \cref{theorem:mq_weighted}. This result was also proved by \citeA{Michalak:etal:2010:embedded}.
\end{proof}

%%%%%%%%%%%%%%%%%%%%%%%%%%%%%%%%%%%%%%%%%%%%%%%%%%%%%%%%%%%
%%%%%%%%%%%%%%      EXTERNALITY-FREE         %%%%%%%%%%%%%%
%%%%%%%%%%%%%%%%%%%%%%%%%%%%%%%%%%%%%%%%%%%%%%%%%%%%%%%%%%%
\subsection{Computing the EF-value}
The EF-value, considered by us next, is complementary to the MQ-value.
Combining \cref{eq:ef_value,eq:lemma:game_from_g_gamma} gives:
\[ EF_i(g^{\gamma}) =  \sum_{f \in C_k(G^{\gamma})} \frac{\zeta_i(S_f^*)}{\#_f} \cdot [|P_f|-1 = n - |S_f^*|]. \]
Note that condition $|P_f|-1 = n-|S_f^*|$ holds if and only if every player $i \in N \setminus S_f^*$ form a singleton coalition $\{i\}$ in $P_f$, i.e., $P_f = \{S_f^*\} \cup \{\{i\} : i \in N \setminus S_f^*\}$.
This means that every node $u \in V \setminus VS_f^*$ has a label of size one and is colored with a different color than all other nodes.
Hence, $VP_f$ is uniquely defined by set $VS_f^*$.

Let us split nodes into three sets:
\begin{itemize}
\item $V' = \{v \in V : |l(v)|>1\} \cup \{v^*\}$ consists of all nodes with non-singleton labels and node $v^*$;
\item $NV' = \left(\bigcup_{v \in V'} \mathcal{N}(v)\right) \setminus V'$ consists of all neighbors of nodes from $V'$;
\item $U' = V \setminus (V' \cup NV')$ consists of all the remaining nodes.
\end{itemize}
See Figure~\ref{figure:eq_proofs} for an illustration.
Since nodes with non-singleton labels cannot be in $V \setminus VS_f^*$, we know that $VS_f^*$ contains $V'$. 
Hence, if $V'$ is not an independent set, then the formula evaluates to zero.
Assume otherwise.
We know that $VS_f^*$ is an independent set, hence it cannot contain any node from $NV'$. 
As a result, $VS_f^*$ is the union of $V'$ and a subset of $U'$ that is an independent set.

\begin{figure}[t]
\centering
\begin{tikzpicture}[x=5cm,y=5cm] % change these values to adjust the size of a figure
  \node[main node,minimum size=1.01cm] (1) at (0.00, 0.99) {}; 
  \node[main node,color2] (1) at (0.00, 0.99) {$1$}; 
  \node[main node,color2] (2) at (0.20, 0.78) {$4$,$6$}; 
  \node[main node,color2] (3) at (0.20, 1.18) {$3$,$7$}; 
  \node[main node,color1] (5) at (0.60, 1.12) {$2$}; 
  \node[main node,color4] (8) at (1.00, 1.18) {$5$}; 
  \node[main node,color3] (6) at (0.60, 0.81) {$8$}; 
  \node[main node,color2] (7) at (1.00, 0.88) {$9$}; 
  
  \draw[dashed] (-0.15,0.65) rectangle (0.35, 1.33);
  \node at (0.1, 0.58) {$V'$};

  \draw[dashed] (0.45,0.65) rectangle (0.72, 1.33);
  \node at (0.6, 0.58) {$NV'$};

  \draw[dashed] (0.86,0.65) rectangle (1.13, 1.33);
  \node at (!.0, 0.58) {$U'$};

  \path[draw,thick]
  (7) edge  (6)
  (7) edge  (5)
  (2) edge  (5)
  (1) edge  (5)
  (8) edge  (7)
  (1) edge  (6)
  (3) edge  (5)
  (5) edge  (6)
  ;
\end{tikzpicture}
\caption{Partition of nodes into three sets: $V'$, $NV'$ and $U'$ and an example of a coloring $f$ in which $|P_f|-1=n-|S_f^*|$: nodes from $V'$ have the same color, each other color appears at most once in a node with a singleton label.}
\label{figure:eq_proofs}
\end{figure}

Also, if $VS = VS_f^*$ for some coloring $f$, then there are $\#_f$ colorings with the same set $VS_f^*$.
Hence, we get the following formula for the EF-value:
\begin{equation}\label{eq:ef_final}
EF_i(g^{\gamma}) = \sum_{U \in I(G^{\gamma}) : U \subseteq U'} \zeta_i(l(V' \cup U))
\end{equation}
where $I(G^{\gamma})$ is the set of all independent sets in graph $G^{\gamma}$.

In the following two theorems, we show that this sum is hard to compute in general, but it is easy to compute if the player-graph is regular.

\begin{theorem}\label{theorem:ef_embedded}
For a game represented as embedded MC-nets, the EF-value can be computed in polynomial time.
\end{theorem}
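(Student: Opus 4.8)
The plan is to start from the closed form in Eq.~\eqref{eq:ef_final} and use the structural restrictions on $G^{\gamma}$ supplied by Lemma~\ref{lemma:g_gamma_embedded} to collapse the sum over independent sets into a sum over $O(n)$ terms. By Lemma~\ref{lemma:hybrid_rules_embedded} together with linearity of the EF-value it is enough to treat a game given by a single hybrid rule $\gamma$ of weight $1$ whose graph $G^{\gamma}$ satisfies conditions (i)--(iii) of Lemma~\ref{lemma:g_gamma_embedded}; then Eq.~\eqref{eq:ef_final} writes $EF_i(g^{\gamma})$ as a sum over the independent sets $S \subseteq U$, where $V^*$ is the set of nodes with non-singleton labels and $U$ is the set of nodes outside $V^* \cup \{v_1\}$ having no neighbour in $V^* \cup \{v_1\}$.

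The first observation I would make is that the independent-set restriction in Eq.~\eqref{eq:ef_final} is vacuous for embedded MC-nets: every node of $U$ is distinct from $v_1$ and non-adjacent to $v_1$, so by condition (ii) of Lemma~\ref{lemma:g_gamma_embedded} the set $U$ is an independent set, hence so is every $S \subseteq U$. (If $V^* \cup \{v_1\}$ is itself not independent, then $EF_i(g^{\gamma}) = 0$ and we are done; otherwise the sum ranges over all $2^{|U|}$ subsets of $U$.) Next, by condition (iii) every node of $U$ carries a singleton label and, since the labels partition $N$ by condition (i), these labels are pairwise disjoint and disjoint from $l(V^* \cup \{v_1\})$. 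Hence $|l(S \cup V^* \cup \{v_1\})| = |S| + b$ with $b := |l(V^* \cup \{v_1\})|$ a constant, and the bit $[i \in l(S \cup V^* \cup \{v_1\})]$ is either fixed (when the node carrying player $i$ lies in $V^* \cup \{v_1\}$, or is a singleton adjacent to it) or equals $[u_i \in S]$, where $u_i$ is the unique node with $l(u_i) = \{i\}$ (the case $i \in l(U)$). Since $\zeta^i_T$ depends only on $|T|$ and on $[i \in T]$, I would then group the subsets $S \subseteq U$ by their size $s$ --- there are $\binom{|U|}{s}$ of them in the first case, and $\binom{|U|-1}{s-1}$ containing $u_i$ versus $\binom{|U|-1}{s}$ avoiding it in the second --- which turns $EF_i(g^{\gamma})$ into an explicit sum of $O(n)$ terms, each a product of a binomial coefficient and a ratio of factorials.

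All the preprocessing --- converting the embedded MC-nets rule to the required hybrid rule, building $G^{\gamma}$, identifying $v_1$, $V^*$, $U$ and $u_i$, and testing whether $V^* \cup \{v_1\}$ is independent --- is clearly polynomial, and evaluating an $O(n)$-term sum of factorial ratios is polynomial as well, so this proves the theorem. I expect the real content to be concentrated in the observation that $U$ is an independent set: this is precisely where conditions (ii)--(iii) of Lemma~\ref{lemma:g_gamma_embedded} are used, and it is the reason why the very same sum in Eq.~\eqref{eq:ef_final} becomes \#P-hard for general weighted MC-nets (Theorem~\ref{theorem:ef_weighted}). The only other point requiring care is the bookkeeping around node $v_1$, whose own label may be non-singleton and must be kept separate from $U$ at every step, together with the routine case split on where player $i$'s label sits.
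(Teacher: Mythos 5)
Your proposal is correct and follows essentially the same route as the paper's proof: reduce via Lemma~\ref{lemma:hybrid_rules_embedded} and linearity to a single hybrid rule satisfying $(*)$, use Eq.~\eqref{eq:ef_final}, invoke conditions (ii)--(iii) of Lemma~\ref{lemma:g_gamma_embedded} so that the independent-set restriction on $S \subseteq U$ becomes vacuous and all labels in $U$ are singletons, and then group subsets by size and by the location of player $i$ to get a polynomially evaluable sum. The only cosmetic difference is that the paper dispatches the degenerate case by noting that any non-singleton-label node other than $v_1$ must be adjacent to $v_1$ (forcing $EF_i(g^{\gamma})=0$), whereas you test independence of $V^* \cup \{v_1\}$ directly, which amounts to the same thing.
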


\begin{proof}
Let $\gamma$ be a regular hybrid rule and $G^{\gamma}$ be the corresponding regular player-graph (from Lemmas~\ref{lemma:hybrid_rules_embedded} and \ref{lemma:g_gamma_embedded} we know that for every embedded MC-nets rule such equivalent hybrid rule and graph exists).

Let us analyze graph $G^{\gamma}$.
First, observe that if there exists a node, other than $v^*$, with non-singleton label, then since $G^{\gamma}$ is regular it is adjacent to $v^*$.
Hence, $V'$ does not form an independent set and, as we already argued, the formula evaluates to zero: $EF_i(g^{\gamma}) = 0$ for every $i \in N$. 

Assume otherwise, i.e., that all nodes other than $v^*$ have singleton labels: $V' = \{v^*\}$.
Since $G^{\gamma}$ is regular, we get that $U'$ is an independent set.
Thus, from \cref{eq:ef_final}:
\[ EF_i(g^{\gamma}) = \sum_{U \subseteq U'} \zeta_i(l(U \cup \{v^*\})). \]				
Note that $|l(v^*)| = n-k+1$ and $|l(U \cup \{v^*\})| = |U| + n-k+1$ where $k$ is the number of nodes in graph $G^{\gamma}$.
Therefore:
\begin{itemize}
\item If $i \in l(v^*)$, then:
\[ EF_i(g^{\gamma}) = \sum_{s=0}^{|U'|} \binom{|U'|}{s} \frac{(s+n-k)!(k-s-1)!}{n!} = \frac{(n-k)!(k-|U'|-1)}{(n-|U'|)!}.\]
\item If $i \in N \setminus l(U' \cup \{v^*))$, then 
\[ EF_i(g^{\gamma}) = - \sum_{s=0}^{|U'|} \binom{|U'|}{s} \frac{(s+n-k+1)!(k-s-2)!}{n!} = \frac{(n-k+1)!(k-|U'|-2)}{(n-|U'|)!}.\]
\item If $i \in l(U')$, then:
\begin{multline*} 
EF_i(g^{\gamma}) = \sum_{s=1}^{|U'|} \binom{|U'|\!-\!1}{s-1} \frac{(s+n-k)!(k-s-1)!}{n!} \\
- \sum_{s=0}^{|U'|-1} \binom{|U'|\!-\!1}{s-1} \frac{(s+n-k+1)!(k-s-2)!}{n!} = 0.
\end{multline*}
\end{itemize}
These values can be computed in polynomial time.
This concludes the proof.
\end{proof}

%\begin{proof}[Sketch of proof]
%From \cref{lemma:hybrid_rules_embedded} it is enough to consider regular hybrid rules with weight $1$.
%Fix such a hybrid rule $\gamma$ and consider $G^{\gamma}$.
%If $V^* \setminus \{v_1\} \neq \emptyset$, i.e., there exists a node, other than $v_1$, with the size of a label larger than one, then from \cref{lemma:g_gamma_embedded} it must be adjacent to $v_1$; hence, $EF_i(g^{\gamma}) = 0$ for every $i \in N$. 
%Assume otherwise. 
%We get that $V^* \cup \{v_1\} = \{v_1\}$, and $U$ is the set of nodes not adjacent to $v_1$. 
%From \cref{lemma:g_gamma_embedded} we know that $U$ is an independent set. 
%Thus, $EF_i(g^{\gamma}) = \sum_{S \subseteq U} \zeta_i(l(S \cup \{v_1\}))$ which can be computed in polynomial time.
%\end{proof}

\begin{theorem}\label{theorem:ef_weighted}
For a game represented as weighted MC-nets, computing the EF-value is \#P-hard.
\end{theorem}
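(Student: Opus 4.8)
The plan is to prove \#P-hardness by a polynomial-time Turing reduction from counting the independent sets of a graph, which is \#P-complete, using the interpolation technique sketched around Eq.~\eqref{eq:technique}. From an algorithm for the EF-value I will recover, for an arbitrary input graph $H$ on $k$ vertices, all the numbers $x_m=$ (number of independent sets of $H$ of size $m$), and then output $\sum_{m=0}^{k}x_m$. Membership in \#P is the easy direction: by linearity and Lemma~\ref{lemma:hybrid_rules_weighted} it reduces to a single hybrid rule, and Eq.~\eqref{eq:ef_final} then expresses the EF-value as a fixed rational combination of quantities of the form ``number of independent sets $S\subseteq U$ with $\sum_{v\in S}|l(v)|$ equal to a given value (and with the $v_1$-side player $i$ present or absent)'', each of which is a \#P function.

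For the reduction, fix $H=(V_H,E_H)$ with $V_H=\{w_1,\dots,w_k\}$; I may assume $E_H\neq\emptyset$, as otherwise the answer is $2^k$. For each $j\in\{0,1,\dots,k\}$ I build a weighted MC-nets game $g_j$ by exhibiting a labeled graph and invoking Lemma~\ref{lemma:g_gamma_weighted}: let $G_j$ have nodes $\{v_1\}\cup\{v_2,\dots,v_{k+1}\}\cup\{u_1,\dots,u_j\}$, an edge between $v_{a+1}$ and $v_{b+1}$ for each $\{w_a,w_b\}\in E_H$, and an edge between $v_1$ and each dummy node $u_\ell$; set $l(v_1)=\{1\}$, $l(v_{a+1})=\{a+1\}$, $l(u_\ell)=\{k+1+\ell\}$, so that the players are $N_j=\{1,\dots,n_j\}$ with $n_j=k+1+j$. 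These singleton labels partition $N_j$, so $G_j=G^{\gamma_j}$ for a hybrid rule $\gamma_j$ of polynomial size, which I take with weight $1$; a hybrid rule is a weighted MC-nets rule. Since all labels are singletons, $V^*=\emptyset$, $V^*\cup\{v_1\}=\{v_1\}$ is independent, and the set $U$ appearing in Eq.~\eqref{eq:ef_final} equals $\{v_2,\dots,v_{k+1}\}$ (the dummies are adjacent to $v_1$, hence excluded). Note that when $E_H\neq\emptyset$ this graph violates condition~(ii) of Lemma~\ref{lemma:g_gamma_embedded}, so $\gamma_j$ is genuinely a weighted and not an embedded MC-net, consistent with Theorem~\ref{theorem:ef_embedded}. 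Taking player $i=1\in l(v_1)$, which lies in $l(S\cup\{v_1\})$ for every $S$, observing $|l(S\cup\{v_1\})|=|S|+1$, and noting that $S\subseteq\{v_2,\dots,v_{k+1}\}$ is independent in $G_j$ exactly when the corresponding vertex subset is independent in $H$, Eq.~\eqref{eq:ef_final} yields
\[
EF_1(g_j)=\sum_{m=0}^{k} x_m\cdot\frac{m!\,(n_j-m-1)!}{n_j!}=\sum_{m=0}^{k} a_{j,m}\,x_m,\qquad a_{j,m}:=\frac{m!\,(k+j-m)!}{(k+j+1)!}.
\]

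This is precisely the system of Eq.~\eqref{eq:technique} (reindexed to $0,\dots,k$) with $f(j)=EF_1(g_j)$, and the step I expect to be the main obstacle is showing that the $(k{+}1)\times(k{+}1)$ matrix $A=(a_{j,m})_{0\le j,m\le k}$ is invertible, since only then does the reduction actually pin down the $x_m$. I would argue this analytically: by the Beta integral, $a_{j,m}=\int_0^1 t^m(1-t)^{k+j-m}\,dt=\langle(1-t)^j,\,t^m(1-t)^{k-m}\rangle$ for the inner product $\langle p,q\rangle=\int_0^1 p(t)q(t)\,dt$ on the space of real polynomials of degree at most $k$. Since $\{(1-t)^j\}_{j=0}^{k}$ and $\{t^m(1-t)^{k-m}\}_{m=0}^{k}$ are each bases of that $(k{+}1)$-dimensional space and $\langle\cdot,\cdot\rangle$ is positive definite, the matrix of the form in this pair of bases, namely $A$, is nonsingular. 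Given this, the reduction concludes routinely: query the EF-algorithm for $f(0),\dots,f(k)$, solve the linear system by Gaussian elimination (all entries are rationals of polynomially bounded bit-length, and $|f(j)|\le 2^k$), read off $x_0,\dots,x_k$, and return $\sum_{m=0}^{k}x_m$. Combined with the membership remark, this establishes \#P-completeness.
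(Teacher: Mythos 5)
Your proposal is correct, and it follows the paper's overall strategy (reduce from counting independent sets, generate a family of games indexed by $j$, and invert the resulting linear system as in Eq.~\eqref{eq:technique}), but the gadget and the invertibility argument are genuinely different. The paper keeps $v_1$ isolated and inflates its \emph{label} with $j$ extra players, so that Eq.~\eqref{eq:ef_final} produces coefficients $(m+j)!(k-m)!/(k+j+1)!$; nonsingularity is then obtained by recognizing the Hankel-type matrix $\bigl((i+j)!\bigr)_{0\le i,j\le k}$ and citing Bacher's determinant formula $\prod_{i=0}^k (i!)^2$. You instead keep $l(v_1)=\{1\}$ and attach $j$ singleton-labeled dummy nodes adjacent to $v_1$, which yields the Beta-function coefficients $a_{j,m}=m!\,(k+j-m)!/(k+j+1)!=\int_0^1 t^m(1-t)^{k+j-m}\,dt$, and you prove invertibility by viewing $A$ as the matrix of the positive-definite form $\langle p,q\rangle=\int_0^1 pq$ with respect to the two bases $\{(1-t)^j\}$ and $\{t^m(1-t)^{k-m}\}$ of the degree-$\le k$ polynomials. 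Your route buys a self-contained, citation-free nonsingularity proof (at the cost of a slightly larger graph), while the paper's choice keeps the node set fixed and reuses the same Bacher-determinant machinery it also needs for Theorem~\ref{theorem:my_embedded}; your derivation of the system from Eq.~\eqref{eq:ef_final}, the identification $U=\{v_2,\dots,v_{k+1}\}$, the polynomial size and bit-length bounds, and the (admittedly brief, but matching the paper's level of detail) membership argument are all sound.
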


\begin{proof}
%The value $EF_i(g^{\gamma})$ can be considered as the number of accepting paths of nondeterministic Turing machine, so the problem is in \#P.
%To show that the problem is \#P-hard, 
We use a Turing reduction from the problem of counting all independent sets in a graph which is \#P-complete \cite{Valiant:1979:enumeration}. 

Let $G = (V,E)$ be an arbitrary graph with $V = \{v_1,\dots,v_k\}$.
Let $I_m(G)$ be the set of independent sets of size $m$ in graph $G$.
We will determine $|I_m(G)|$ for every $m \in \{0,\dots,k\}$.

\begin{figure}[t]
\centering
\begin{tikzpicture}[x=5cm,y=5cm] % change these values to adjust the size of a figure
  \node[main node,label={left:$v_1$}] (1) at (0.17, 1.00) {$1$}; 
  \node[main node,label={left:$v_2$}] (2) at (0.00, 0.80) {$2$}; 
  \node[main node,label={below:$v_3$}] (3) at (0.31, 0.72) {$3$}; 
  \node[main node] (4) at (0.51, 1.00) {\ $\dots$}; 
  \node[main node,label={right:$v_k$}] (5) at (0.63, 0.73) {$k$}; 
  \node[main node,label={left:$v^*$}] (6) at (1.30, 0.85) {};
%  \node[main node,minimum size=1.02cm] (6) at (1.30, 0.85) {};

  \draw[->,thick] (1.30,0.85) -- (1.45,0.81);
  \node[draw=none] (6a) at (1.91, 0.80) {$l(v^*) = k$+1, $\dots$, $k$+$j$+1};
  
  \draw[dashed] (0.33,0.84) ellipse (2.8cm and 1.8cm);
  \node[draw=none] at (0.86,1.1) {$G$};

  \path[draw,thick]
  (4) edge  (3)
  (1) edge  (4)
  (2) edge  (1)
  (1) edge  (3)
  (3) edge  (2)
  (5) edge  (4)
  (5) edge  (1)
  ;
\end{tikzpicture}
\caption{Graph $G^{\gamma_j}$ from the proof of \cref{theorem:ef_weighted}.} 
\label{figure:ef_weighted}
\end{figure}

To this end, let us construct $k+1$ player-graphs: 
for $j \in \{0, \dots, k\}$ we label each node $v_i$ with $i$ and add an isolated node $v^*$ with label of size $j+1$.
Specifically, we construct a player-graph $G^{\gamma_j} = (V \cup \{v^*\}, E,l,v^*)$ with $l(v_i) = \{i\}$ for $v_i \in V$ and $l(v^*) = \{k+1, \dots, k+j+1\}$ (see \cref{figure:ef_weighted} for an illustration).
Based on \cref{lemma:g_gamma_weighted} we know that there exists a hybrid rule $\gamma_j$ for which this graph is $G^{\gamma_j}$.
Note that in this graph we have $V' = \{v^*\}$ and $U' = V$. 
Hence, \cref{eq:ef_final} for $i = k+1$ and graph $G^{\gamma_j}$ simplifies to:
\[ EF_i(g^{\gamma_j}) = \sum_{m=0}^{k} \frac{(m+j)!(k-m)!}{(k+j+1)!} |I_m(G)|. \]
This system of linear equations is equivalent to the following matrix form:
\[
\begin{bmatrix}
0!k! & 1!(k-1)! & \dots  & k!0! \\
1!k! & 2!(k-1)! & \dots  & (k+1)!0! \\
\vdots & \vdots & \ddots & \vdots \\
k!k! & (k+1)!(k-1)! & \dots  & (2k)!0!
\end{bmatrix} 
\cdot 
\begin{bmatrix}
|I_0(G)| \\
|I_1(G)| \\
\vdots \\
|I_k(G)|
\end{bmatrix}
=
\begin{bmatrix}
(k+1)! EF_i(g^{\gamma_0}) \\
(k+2)! EF_i(g^{\gamma_1}) \\
\vdots \\
(2k+1)! EF_i(g^{\gamma_k})
\end{bmatrix}
\]
From \cite[Theorem 1.1]{Bacher:2002}, we know that the determinant of matrix $A = ((a+b)!)_{0 \le a,b \le k}$ equals $\prod_{a=0}^{k} (a!)^2$; hence, the determinant of the above square matrix is $\prod_{a=0}^{k} (a!)^3$ (columns of $A$ are multiplied by $k!, \dots, 0!$). 
Since the determinant is non-zero, the matrix is invertible and knowing $EF_i(g^{\gamma_0}), \dots, EF_i(g^{\gamma_k})$ allows us to find $I_0(G), \dots, I_k(G)$ in polynomial time using Gaussian elimination. 
Hence, computing the EF-value is \#P-hard.
\end{proof}

%%%%%%%%%%%%%%%%%%%%%%%%%%%%%%%%%%%%%%%%%%%%%%%%%%%%%%%%%%%
%%%%%%%%%%%%%%           HU-YANG             %%%%%%%%%%%%%%
%%%%%%%%%%%%%%%%%%%%%%%%%%%%%%%%%%%%%%%%%%%%%%%%%%%%%%%%%%%
\subsection{Computing the HY-value}
The HY-value is the first value considered by us with non-zero weights of every embedded coalition. 
Here, value $g(S,P)$ is multiplied by $\zeta_i(S) \cdot \psi(S,P)$, where $\psi(S,P) = |\{P' \in \mathcal{P} : \{T \setminus S : T \in P'\} = P \setminus \{S\}\}|$. 
To put it in words, $\psi(S,P)$ is the number of partitions that can be obtained from $P \setminus \{S\}$ by inserting players from $S$. 

Let us express values $\psi(S,P)$ using a notion of generalized \emph{Bell numbers}.
The \emph{$n$-th Bell number}, denoted by $B_n$, is the number of all possible partitions of $n$ elements. 
Now, \emph{$r$-Bell numbers} are a generalization of Bell numbers: $B_{n,r}$ is the number of partitions of $n+r$ elements such that the first $r$ elements are in distinct subsets~\cite{Mezo:2011}. In particular, $B_{1,2} = 3$: we have $\{\{1,3\}, \{2\}\}, \{\{1\}, \{2,3\}\}, \{\{1\}, \{2\}, \{3\}\}$.

Now, observe that $\psi(S,P) = B_{|S|,|P|-1}$.
Thus, \cref{eq:hy_value,eq:lemma:game_from_g_gamma} yields:
\begin{equation}\label{eq:hy_final}
HY_i(g^{\gamma}) = \frac{1}{B_n} \sum_{f \in C_k(G^{\gamma})} \frac{\zeta_i(S_f^*)}{\#_f} \cdot B_{|S_f^*|,|P_f|-1}.
\end{equation}
Thus, for a fixed player $i$ and the size of $S_f^*$, the weight of a coloring depends solely on the number of colors it uses.

We will prove that computing this sum is \#P-hard.
To this end, we first identify the determinant of matrix of generalized Bell numbers.

\begin{lemma}\label{lemma:bnr_matrix}
The determinant of matrix $B = (B_{j,m})_{1 \le j,m \le k}$ equals $(\prod_{i=0}^k i!) \cdot (\sum_{i=0}^k 1/i!)$.
\end{lemma}

\begin{proof}
Matrix $B$ looks as follows:
\[ B = \begin{bmatrix}
B_{1,1} & B_{1,2} & \dots & B_{1,k} \\
B_{2,1} & B_{2,2} & \dots & B_{2,k} \\
\vdots & \vdots & \ddots & \vdots \\
B_{k,1} & B_{k,2} & \dots & B_{k,k}
\end{bmatrix} \]
Let $S(n,k)$ be the Stirling number of the second kind, i.e., the number of partitions of $n$ elements into $k$ subsets (e.g., $S(4,2) = 7$).
Consider matrix:
\[ C = \begin{bmatrix}
S(1,1) & S(2,1) & S(3,1) & \dots  & S(k,1) \\
0 & S(2,2) & S(3,2) & \dots  & S(k,2) \\
0 & 0 & S(3,3) & \dots  & S(k,3) \\
\vdots & \vdots & \vdots & \ddots & \vdots \\
0 & 0 & 0 & \dots  & S(k,k)
\end{bmatrix} \]
Since $S(i,i) = 1$ for every $i \in \mathbb{N}$, matrix $C$ is triangular with diagonal $1$, so $\det(C) = 1$. 
Consider the product $A = B \cdot C$.
Multiplying by $C$ is equivalent to adding to column $j$ columns $1,2,\dots,j-1$ with weights $S(j,1), S(j,2), \dots, S(j,j-1)$.
Hence, we get that:
\[ A[i,j] = B_{i,1} \cdot S(j,1) + \dots + B_{i,j} \cdot S(j,j). \] 
Let us argue that $A[i,j] = B_{i+j}$.
Take $j$ first elements from $i+j$ and consider all their possible partitions.
There are $S(j,m)$ partitions of $j$ elements into $m$ subsets which implies there are $S(j,m) B_{i,m}$ partitions of $i+j$ elements in which first $j$ elements form $m$ subsets. 
Summing over all $m \in \{1,\dots,j\}$ we get all $B_{i+j}$ partitions of $i+j$ elements, each exactly once.

As a result, we get that $A = B \cdot C = (B_{i+j})_{1 \le i,j \le k}$.
Now, from \cite[Remark 2]{Aigner:1999} we get that the determinant of matrix $(B_{i+j})_{1 \le i,j \le k}$ is $(\sum_{i=0}^k k!/(k-i)!) \cdot (\prod_{i=0}^{k-1} i!)$ which is equivalent to $(\prod_{i=0}^{k} i!) \cdot (\sum_{i=0}^k 1/i!)$. Hence, it is also the determinant of matrix $B$.
\end{proof}

%\begin{lemma}\label{lemma:bnr_recursion}
%For every $n,r \in \mathbb{N}$, $B_{n+r} = \sum_{i=1}^r S(r,i) B_{n,i}$, where $S(r,i)$ is the Stirling number of the second kind, i.e., the number of partitions of $r$ elements into $i$ subsets.
%\end{lemma}
%
%\begin{proof}
%Take $r$ first elements from $n+r$ and consider all their possible partitions. There are $S(r,i)$ partitions of $r$ elements into $i$ subsets which implies there are $S(r,i) B_{n,i}$ partitions of $n+r$ elements in which first $r$ elements form $i$ subsets. Summing over all $i \in \{1,\dots,r\}$ concludes the proof.
%\end{proof}

We are now ready to state the main theorem.

\begin{theorem}\label{theorem:hy_embedded}
For a game represented as embedded MC-nets or weighted MC-nets, computing the HY-value is \#P-hard.
\end{theorem}

\begin{proof}
%The value $HY_i(g^{\gamma})$ can be considered as the number of accepting paths of nondeterministic Turing machine, so the problem is in \#P.
%To show that the problem is \#P-hard, 
We use a Turing reduction from the \emph{chromatic polynomial problem}, i.e., counting $m$-colorings in a graph, which is \#P-complete~\cite{Jaeger:etal:1990}.

Let $G = (V,E)$ be an arbitrary graph with $V = \{v_1,\dots,v_k\}$ and $m$ be an arbitrary number. 
The task is to determine $|C_m(G)|$.
Let $c_i$ be the number of $k$-colorings that uses exactly $i$ colors.
We will determine $c_i$ for every $i \in \{1,\dots,k\}$.
From these values it is easy to compute the number of $m$-colorings with the following formula: 
\[ |C_m(G)| = \sum_{i=1}^{k} \frac{\binom{m}{i}}{\binom{k}{i}} c_i = \sum_{i=1}^{k} \frac{m(m-1)\cdots(m-i+1)}{k(k-1)\cdots(k-i+1)} \cdot c_i. \]
The argument is as follows: 
Consider an $i$-coloring $f$ that uses $i$-colors: $\{1,\dots,i\}$.
Now, every $k$-coloring that uses exactly $i$ colors corresponds to one of these colorings. 
Specifically, there are $\binom{k}{i}$ $k$-colorings that preserves the same partition of nodes and order of colors, i.e., $g: V \rightarrow \{1,\dots,k\}$ s.t. $g(v_i) \le g(v_j)$ if and only if $f(v_i) \le f(v_j)$.
Analogically, there are $\binom{m}{i}$ such $m$-colorings.
Hence, if we have the number of $k$-colorings that uses $i$ colors to obtain the number of $m$-colorings that uses $i$ colors we need to divide by $\binom{k}{i}$ and multiply by $\binom{m}{i}$.

Let us construct $k$ graphs: for $j \in \{1,\dots,k\}$ we label each node $v_i$ with $i$ and add node $v^*$ connected to all nodes from $V$ with label of size $j$.
Specifically, we construct a player-graph $G^{\gamma_j} = (V \cup \{v^*\}, E \cup \{\{v_i,v^*\} : v_i \in V\}, l, v^*)$ with $l(v_i) = \{i\}$ for $v_i \in V$ and $l(v^*) = \{k+1, \dots, k+j\}$. 
Since node $v^*$ is connected to all other nodes we know that $G^{\gamma_j}$ is regular, i.e., there exists a regular hybrid rule $\gamma_j$ equivalent to some embedded MC-nets rule such that $G^{\gamma_j}$ is the corresponding graph. 

\begin{figure}[t]
\centering
\begin{tikzpicture}[x=5cm,y=5cm] % change these values to adjust the size of a figure
  \node[main node,label={left:$v_1$}] (1) at (0.17, 1.00) {$1$}; 
  \node[main node,label={left:$v_2$}] (2) at (0.00, 0.80) {$2$}; 
  \node[main node,label={below:$v_3$}] (3) at (0.31, 0.72) {$3$}; 
  \node[main node] (4) at (0.51, 1.00) {\ $\dots$}; 
  \node[main node,label={below:$v_k$}] (5) at (0.63, 0.73) {$k$}; 
  \node[main node,label={above:$v^*$}] (6) at (1.30, 0.85) {};
%  \node[main node,minimum size=1.02cm] (6) at (1.30, 0.85) {};

  \draw[->,thick] (1.30,0.85) -- (1.45,0.81);
  \node[draw=none] (6a) at (1.86, 0.80) {$l(v^*) = k$+1, $\dots$, $k$+$j$};
  
  \draw[dashed] (0.33,0.84) ellipse (2.8cm and 1.8cm);
  \node[draw=none] at (0.86,1.1) {$G$};

  \path[draw,thick]
  (4) edge  (3)
  (1) edge  (4)
  (2) edge  (1)
  (1) edge  (3)
  (3) edge  (2)
  (5) edge  (4)
  (5) edge  (1)
  ;
  
  \path[draw,thick,gray]
  (6) edge  (1)
  (6) edge  (2)
  (6) edge  (3)
  (6) edge  (4)
  (6) edge  (5)
  ;
\end{tikzpicture}
\caption{Graph $G^{\gamma_j}$ from the proof of \cref{theorem:hy_embedded}.} 
\label{figure:hy_embedded}
\end{figure}

Let us analyze the HY-value of player $i = k+1$ in game $g^{\gamma_j}$.
We have $k+1$ nodes in graph $G^{\gamma_j}$, $n = k+j$ and $|S_f^*| = j$ for every coloring $f$.
Hence, Equation~\eqref{eq:hy_final} yields:
\begin{eqnarray*}
HY_i(g^{\gamma_j}) & = & \frac{(j-1)!k!}{(k+j)! B_{k+j}} \sum_{f \in C_{k+1}(G^{\gamma})} \frac{B_{j,|P_f|-1}}{\#_f} \\
& = & \frac{(j-1)!k!}{(k+j)! B_{k+j}} \sum_{m=1}^{k} \sum_{f \in C_{k+1}(G^{\gamma}) : |P_f| = m+1} \frac{(k-m)!}{(k+1)!} B_{j,m}
\end{eqnarray*}
Here, we used the fact that if $(k+1)$-coloring $f$ uses $m+1$ colors, then $\#_f = (k+1)!/(k-m)!$.
The number of $(k+1)$-colorings of graph $G^{\gamma_j}$ that uses $m+1$ colors is equal to $(k+1)$ (color of node $v^*$) times the number of $k$-colorings of graph $G$ that uses $m$ colors:
\[ |\{f \in C_{k+1}(G^{\gamma}) : |P_f| = m+1\}| = (k+1) \cdot c_m. \]
Hence, we get:
\[
HY_i(g^{\gamma}) = \frac{(j-1)!}{(k+j)!B_{k+j}} \sum_{m=1}^{k} (k-m)! \cdot B_{j,m} \cdot c_m.
\]
This system of linear equations can be presented in the matrix form as follows:
\[
\begin{bmatrix}
(k-1)! B_{1,1} & (k-2)! B_{1,2} & \dots  & 0! B_{1,k} \\
(k-1)! B_{2,1} & (k-2)! B_{2,2} & \dots  & 0! B_{2,k} \\
\vdots & \vdots & \ddots & \vdots \\
(k-1)! B_{k,1} & (k-2)! B_{k,2} & \dots  & 0! B_{k,k}
\end{bmatrix} 
\cdot 
\begin{bmatrix}
c_1 \\
c_2 \\
\vdots \\
c_k
\end{bmatrix}
=
\begin{bmatrix}
\frac{(k+1)!}{0!} B_{k+1} HY_i(g^{\gamma_1}) \\
\frac{(k+2)!}{1!} B_{k+2} HY_i(g^{\gamma_2}) \\
\vdots \\
\frac{(2k)!}{(k-1)!} B_{2k} HY_i(g^{\gamma_k})
\end{bmatrix}
\]
From \cref{lemma:bnr_matrix} we get that the determinant of the square matrix equals $(\prod_{i=0}^k i!) \cdot (\sum_{i=0}^k 1/i!)$ multiplied by $(\prod_{i=0}^{k-1} i!)$.
Since the determinant is non-zero, the matrix is invertible and knowing $HY_i(g^{\gamma_1}), \dots, HY_i(g^{\gamma_k})$ allows us to find $c_1, \dots, c_k$ in polynomial time using Gaussian elimination. 
This concludes the proof.
\end{proof}

\subsection{Computing the SS-value}
The SS-value, considered next, is probably the most popular extended Shapley value. 
Combining \cref{eq:ss_value,eq:lemma:game_from_g_gamma} gives:
\[ SS_i(g^{\gamma}) = \sum_{f \in C_k(G^{\gamma})} \frac{\zeta_i(S_f^*)}{\#_f} \cdot \frac{\prod_{T \in P_f \setminus \{S_f^*\}} (|T|-1)!}{(n-|S_f^*|)!}. \]

In what follows, let us focus on graphs in which every node is labeled with a single player: $|l(v)| = 1$ for every $v \in V$.
In such a case, we have $n=k=|V|$ and $|l(T)| = |T|$. 
Under this assumption, formula for the SS-value of player $i \in l(v^*)$ is as follows:
\begin{equation}\label{eq:ss_colorings}
SS_i(g^{\gamma}) = \frac{1}{n!} \sum_{f \in C_k(G^{\gamma})} \frac{\prod_{T \in P_f} (|T|-1)!}{\#_f}.
\end{equation}
As we already mentioned, for a fixed partition $P \in \mathcal{P}$, value $\prod_{T \in P} (|T|-1)!$ is the number of permutations in which $P$ is the partition obtained from a cycle decomposition (for a permutation $h: N \rightarrow N$ such partition is defined as follows: $\{\{i, h(i), h(h(i)), \dots\} : i \in N\}$).
Hence, according to \cref{eq:ss_colorings}, the SS-value of player $i$ is equal to the probability that the partition obtained from a cycle decomposition of a random permutation of nodes corresponds to a (proper vertex) coloring in graph $G^{\gamma}$.

Let us consider a complement of graph $G^{\gamma} = (V,E)$:
\[ \overline{G^{\gamma}} = (V, \{\{u,v\} : u,v \in V, u \neq v\} \setminus E). \]
For every coloring $f \in C_k(G^{\gamma})$, sets of nodes in $VP_f$ are independent sets in $G^{\gamma}$. 
Hence, they are cliques in $\overline{G^{\gamma}}$. 
As a consequence, we get that $SS_i(g^{\gamma})$ from \cref{eq:ss_colorings} is equivalently a weighted sum over \emph{clique covers} (i.e., partitions of the nodes in a graph into cliques):
\begin{equation}\label{eq:ss_final}
SS_i(g^{\gamma}) = \frac{1}{n!} \sum_{P \in QC(\overline{G^{\gamma}})} \prod_{T \in P} (|T|-1)!,
\end{equation}
where $QC(G)$ is the set of all clique covers in graph $G$.
In the following theorem we prove that computing this sum is \#P-hard.

\begin{theorem}\label{theorem:ss_embedded}
For a game represented as embedded MC-nets or weighted MC-nets, computing the SS-value is \#P-hard.
\end{theorem}

\begin{figure}[t]
\centering
\begin{tikzpicture}[x=5cm,y=5cm] % change these values to adjust the size of a figure
  \def\x{0.0}

  \node[main node] (1) at (\x+0.17, 1.00) {$1$}; 
  \node[main node] (2) at (\x+0.00, 0.72) {$2$}; 
  \node[main node] (3) at (\x+0.31, 0.72) {$3$}; 
  \node[main node] (4) at (\x+0.51, 1.00) {$\dots$}; 
  \node[main node] (5) at (\x+0.63, 0.72) {$k\!\shortminus\!1$}; 
  \node[main node,label={below:$v^*$}] (6) at (\x+1.02, 0.85) {$k$};

  \draw[dashed] (\x+0.33,0.84) ellipse (2.8cm and 1.8cm);
  \node[draw=none] at (\x+0.86,1.1) {$G$};

  \path[draw,thick]
  (4) edge  (3)
  (2) edge  (1)
  (1) edge[line width=2pt]  (3)
  (5) edge[line width=2pt]  (4)
  (5) edge  (1)
  ;

  \def\x{1.65}
  \node[main node,color1] (1) at (\x+0.17, 1.00) {$1$}; 
  \node[main node,color2] (2) at (\x+0.00, 0.72) {$2$}; 
  \node[main node,color1] (3) at (\x+0.31, 0.72) {$3$}; 
  \node[main node,color3] (4) at (\x+0.51, 1.00) {$\dots$}; 
  \node[main node,color3] (5) at (\x+0.63, 0.72) {$k\!\shortminus\!1$}; 
  \node[main node,color4,label={below:$v^*$}] (6) at (\x+1.02, 0.85) {$k$};

  \draw[dashed] (\x+0.33,0.84) ellipse (2.8cm and 1.8cm);
  \node[draw=none] at (\x+0.86,1.1) {$\overline{G}$};

  \path[draw,thick]
  (1) edge (4)
  (4) edge (2)
  (2) edge (3)
  (2) edge[bend right=35] (5)
  (3) edge (5)
  (6) edge[bend left=9] (1)
  (6) edge[out=-180,in=25] (2)
  (6) edge[bend right=11] (3)
  (6) edge (4)
  (6) edge (5)
  ;
  
\end{tikzpicture}
\caption{Graph $\overline{G^{\gamma}}$ (one the left) and $G^{\gamma}$ (on the right) from the proof of \cref{theorem:ss_embedded}. 
The matching highlighted on graph $\overline{G^{\gamma}}$ corresponds to the coloring from graph $G^{\gamma}$.} 
\label{figure:ss_embedded}
\end{figure}

\begin{proof}
%The value $SS_i(g^{\gamma})$ can be considered as the number of accepting paths of nondeterministic Turing machine, so the problem is in \#P.
We use a reduction from the problem of counting all matching in a bipartite graph which is \#P-complete~\cite{Valiant:1979:enumeration}.

Let $G = (V,E)$ be an arbitrary bipartite graph with $V = \{v_1,\dots,v_{k-1}\}$ for notational convenience.
Let $h_G$ be the number of all matchings in $G$ (called Hosoya index). 
Our goal is to determine $h_G$.

To this end, let us construct a graph $\overline{G^\gamma} = (V \cup \{v^*\}, E)$ with $l(v_i) = \{i\}$ for $v_i \in V$ and $l(v^*) = \{k\}$.
See Figure~\ref{figure:ss_embedded} for an illustration.
Consider a complement of graph $\overline{G^{\gamma}}$, denoted by $G^{\gamma}$.
Since node $v^*$ does not have any edges in $\overline{G^{\gamma}}$, then it is connected to all nodes in $G^{\gamma}$; hence, $G^{\gamma}$ is regular and there exists a hybrid rule $\gamma$ equivalent to some embedded MC-nets rule such that $G^{\gamma}$ is the corresponding graph.

Now, consider $SS_i(g^{\gamma})$ for $i=k$.
From \cref{eq:ss_final} we know that $SS_i(g^{\gamma})$ is a weighted sum over clique covers of $\overline{G^{\gamma}}$, i.e., partitions of nodes of $\overline{G^{\gamma}}$ into cliques. 
However, since the graph is bipartite, there exist no clique with more than $2$ nodes. 
Hence, for every such a partition, $P$, we have $\prod_{T \in P} (|T|-1)! = 1$.
Thus, from \cref{eq:ss_final} we get:
\[ SS_i(g^{\gamma}) =\frac{1}{k!} |QC(\overline{G^{\gamma}})| = \frac{1}{k!} |QC(G)| = \frac{1}{k!} h_G. \]
where the last equality comes from the fact that in a bipartite graph every partition of nodes into cliques corresponds to exactly one matching. This concludes the proof.
\end{proof}

%%%%%%%%%%%%%%%%%%%%%%%%%%%%%%%%%%%%%%%%%%%%%%%%%%%%%%%%%%%
%%%%%%%%%%%%%%           MYERSON             %%%%%%%%%%%%%%
%%%%%%%%%%%%%%%%%%%%%%%%%%%%%%%%%%%%%%%%%%%%%%%%%%%%%%%%%%%
\subsection{Computing the MY-value}
The last value that we consider is the MY-value which is the first chronologically proposed extension of the Shapley value.
Combining \cref{eq:my_value,eq:lemma:game_from_g_gamma} gives:
\[ MY_i(g^{\gamma}) = \! \sum_{f \in C_k(G^{\gamma})} \! \frac{(-1)^{|P_f|} (|P_f|\!-\!2)!}{\#_f} h_i(f), \mbox{ with } h_i(f) =  \frac{1\!-\!|P_f|}{n} + \sum_{\substack{T \in P_f \setminus \{S_f^*\}\\i \not \in T}} \frac{1}{(n\!-\!|T|)}.\]
We note that both techniques used for the HY-value and the SS-value does not work in this case.

To cope with this problem, we will exploit the fact that weights of the MY-value have a form of a sum over all coalitions.
Specifically, we will consider a difference between the MY-value of two players. 
Let us denote such difference for players $i$ and $i'$ in game $g$ by $MY\Delta_i^{i'}(g)$.
Now, for $i \in S_f^*$ and $i' \in T$ for some $T \in P_f \setminus \{S_f^*\}$ we get $h_i(f) - h_{i'}(f) = 1/(n-|T|)$ and
\begin{equation}\label{eq:my_final}
MY\Delta_i^{i'}(g^{\gamma}) = MY_i(g^{\gamma}) - MY_{i'}(g^{\gamma}) = \sum_{f \in C_k(G^{\gamma})} \frac{(-1)^{|P_f|} (|P_f| - 2)!}{\#_f \cdot (n - |T|)}.
\end{equation}
For a fixed coloring $f$, the weight of a coloring depends on $|P_f|$ and $T$, i.e., coalition in $P_f$ that contains $i'$.
However, if $i'$ is in a label of a node adjacent to all other nodes, then the weight depends solely on the number of colors $f$ uses.
In such a case, we can use a technique described at the beginning of this section (see \cref{eq:technique}).

\begin{theorem}\label{theorem:my_embedded}
For a game represented as embedded MC-nets or weighted MC-nets, computing the MY-value is \#P-hard.
\end{theorem}

\begin{figure}[t]
\centering
\begin{tikzpicture}[x=5cm,y=5cm] % change these values to adjust the size of a figure
  \def\x{0.0}

  \node[main node,label={below:$v_1$}] (1) at (\x+0.17, 1.00) {$1$}; 
  \node[main node,label={below:$v_2$}] (2) at (\x+0.00, 0.72) {$2$}; 
  \node[main node,label={below:$v_3$}] (3) at (\x+0.31, 0.72) {$3$}; 
  \node[main node] (4) at (\x+0.51, 1.00) {\ $\dots$}; 
  \node[main node,label={below:$v_k$}] (5) at (\x+0.63, 0.72) {$k$}; 
  \node[main node,label={below:$v_{k+1}$}] (6) at (\x+1.05, 0.85) {$k$+$1$};
  \node[main node,label={below:$v_{k+2}$}] (7) at (\x+1.30, 0.85) {$k$+$2$};
  \node[main node] (8) at (\x+1.55, 0.85) {\ $\dots$};
  \node[main node,label={below:$v_{k+j+1}$}] (9) at (\x+1.80, 0.85) {};
  \node[main node,label={below:$v^*$}] (10) at (\x+2.05, 0.85) {};

  \draw[->,thick] (\x+1.80,0.85) -- (\x+1.70,1.07);
  \node[draw=none] (6a) at (\x+1.62, 1.12) {$l(v_{k+j+1}) \!=\! k$+$j$+1};

  \draw[->,thick] (\x+2.05,0.85) -- (\x+2.15,0.98);
  \node[draw=none] (6a) at (\x+2.37,1.03) {$l(v^*)\!=\!k$+$j$+2, $\dots$, $3k$+1};

  \draw[dashed] (\x+0.33,0.80) ellipse (2.8cm and 1.8cm);
  \node[draw=none] at (\x+0.86,1.06) {$G$};

  \path[draw,thick]
  (4) edge  (3)
  (2) edge  (1)
  (1) edge[line width=2pt]  (3)
  (5) edge[line width=2pt]  (4)
  (5) edge  (1)
  ;

\end{tikzpicture}
\caption{Graph $\overline{G^{\gamma}}$ from the proof of \cref{theorem:my_embedded}.} 
\label{figure:my_embedded}
\end{figure}

\begin{proof}
%The value $MY_i(g^{\gamma})$ can be considered as the number of accepting paths of nondeterministic Turing machine, so the problem is in \#P.
%To show that the problem is \#P-hard, 
Again, we use a Turing reduction from the problem of counting all matching in a bipartite graph which is \#P-complete~\cite{Valiant:1979:enumeration}.

With the same reasoning as in the SS-value, instead of considering colorings in graph $G^{\gamma}$ we will focus on partitions of a graph into cliques in the complement graph $\overline{G^{\gamma}}$. Then, \cref{eq:my_final} can be rewritten as follows:
\begin{equation}\label{eq:my_final2}
MY\Delta_i^{i'}(g^{\gamma}) = \sum_{P \in QC(\overline{G^{\gamma}})} \frac{(-1)^{|P|} (|P| - 2)!}{(n - |T|)}.
\end{equation}

Let $G = (V,E)$ be an arbitrary bipartite graph with $V = \{v_1,\dots,v_k\}$.
Let $h_G^m$ be the number of all matchings in $G$ of size $k-m$; note that in such a case $m$ is the number of pairs plus the number of unmatched nodes.
We will determine $h_G^m$ for every $m \in \{1,\dots,k\}$.
The sum $\sum_{m=1}^{k} h_G^m$ is the number of all matchings in $G$.

To this end, let us construct $k$ player-graphs: for $j \in \{1,\dots,k\}$ we add $j+1$ new nodes $U = \{v_{k+1}, \dots, v_{k+j+1}\}$, label each node $v_i$ with $i$ and add yet another node $v^*$ with label of size $2k-j$.
Specifically, we construct a player-graph $\overline{G^{\gamma_j}} = (V \cup U \cup \{v^*\}, E, l, v^*)$ with $l(v_i) = i$ for every $v_i \in V \cup U$ and $l(v^*) = \{k+j+2,\dots,3k+1\}$.
Consider a complement of graph $\overline{G^{\gamma_j}}$, denoted by $G^{\gamma_j}$. 
Since node $v^*$ does not have any edges in $\overline{G^{\gamma_j}}$, then it is adjacent to all nodes in $G^{\gamma_j}$; hence, $G^{\gamma_j}$ is a regular player-graph, i.e., there exists a regular hybrid rule $\gamma_j$ equivalent to some embedded MC-nets rule such that $G^{\gamma_j}$ is the corresponding graph. 

Now, consider $MY\Delta_{i}^{i'}(g^{\gamma_j})$ for $i=k+j+2$ and $i'=k+j+1$.
Since node $v_{k+j+1}$ that contains $i'$ does not have any edges, we have $|T| = |l(v_{k+j+1})| = 1$.
From \cref{eq:my_final2} for game $g^{\gamma_j}$ we get that:
\[ MY\Delta_{i}^{i'}(g^{\gamma_j}) = \sum_{m=1}^k \frac{(-1)^{m+j+2} (m+j)!}{3k} \cdot h_G^m = \frac{1}{3k} \sum_{m=1}^k (-1)^{m+j} (m+j)! \cdot h_G^m. \]
This system of linear equations can be presented in the matrix form as follows:
\[
\begin{bmatrix}
2! & -3! & \dots  & \pm (k+1)! \\
-3! & 4! & \dots  & \mp (k+2)! \\
\vdots & \vdots & \ddots & \vdots \\
\pm (k+1)! & \mp (k+2)! & \dots  & (2k)!
\end{bmatrix}
\cdot 
\begin{bmatrix}
h_G^1 \\
h_G^2 \\
\vdots \\
h_G^k
\end{bmatrix}
=
\begin{bmatrix}
3k \cdot MY\Delta_i^{i'}(g^{\gamma_1}) \\
3k \cdot MY\Delta_i^{i'}(g^{\gamma_1}) \\
\vdots \\
3k \cdot MY\Delta_i^{i'}(g^{\gamma_1})
\end{bmatrix} 
\]
Note that by multiplying even rows and then even columns by $(-1)$ we can transform the square matrix into matrix $A = ((i+j)!)_{1 \le i,j \le k}$. 
Moreover, since there are as many even rows as even columns, the determinant of the original matrix is the same as the determinant of $A$.
Now, from \cite[Theorem 1.1]{Bacher:2002}, we know that the determinant of $A$ equals $\prod_{i=0}^{k-1} (i!)(i+2)!$; hence, the determinant of the original matrix is the same.
Since the determinant is non-zero, the matrix is invertible and knowing $MY\Delta_i^{i'}(g^{\gamma_1}), \dots, MY\Delta_i^{i'}(g^{\gamma_k})$ (i.e., $MY_i(g^{\gamma_1}), MY_{i'}(g^{\gamma_1}), \dots, MY_i(g^{\gamma_k}), MY_{i'}(g^{\gamma_k})$) allows us to find $h_G^1, \dots, h_G^k$ in polynomial time using Gaussian elimination. 
\end{proof}

%%%%%%%%%%%%%%%%%%%%%%%%%%%%%%%%%%%%%%%%%%%%%%%%%%%%%%%%%%%
%%%%%%%%%%%%%%      CONCLUSIONS              %%%%%%%%%%%%%%
%%%%%%%%%%%%%%%%%%%%%%%%%%%%%%%%%%%%%%%%%%%%%%%%%%%%%%%%%%%
\section{Conclusions}\label{section:conclusions}
In this paper, we studied the complexity of computing extended Shapley value in games represented as embedded and weighted MC-nets. 
Our results show that weighted MC-nets, which are more concise than embedded MC-nets, are slightly worse when it comes to the Shapley value computation.
For both representations only simplest extended Shapley values, MQ-value and EF-value, that ignore most values in a game can be computed in polynomial time (unless P=NP).
Also, combined with the work by \citeA{Skibski:etal:2020:pdt}, we get that computational properties of partition decision trees are significantly better than both MC-nets representations, as they allow the polynomial-time computation of five extended Shapley values.
This, however, comes at the cost of conciseness.

There are many possible directions for further research. 
It would be natural to study core-related questions for all three representations.
However, as in the case of the the Shapley value, there are multiple ways to extend the core to games with externalities which significantly increases the complexity of this task~\cite{Koczy:2018}.

Another interesting idea would be to analyze hybrid rules and corresponding player-graphs not as a representation, but as a graph-restriction scheme for games with externalities~\cite{Myerson:1977}.
In this way, considered extensions of the Shapley value can be interpreted as the extensions of the Myerson value for the analyzed setting.
Last but not least, a combination of games with externalities and graphs can result in new concepts of game-theoretic network centralities.
Our work can constitute a guideline which extensions should be used in order to obtain tractable measure.

\section*{Acknowledgments}
A preliminary version of the results in this paper appeared in the proceedings of the 34th AAAI Conference on Artificial Intelligence (AAAI-20). 
In this paper, we added proofs of Lemmas~\ref{lemma:hybrid_rules_weighted}--\ref{lemma:game_from_g_gamma} and full proofs (instead of sketches) of Theorems~\ref{theorem:mq_weighted}--\ref{theorem:my_embedded}.
We also added multiple examples, illustrations, intuitions and extended the related work section.
This work was supported by the Polish National Science Centre, under project 2015/19/D/ST6/03113.

\vskip 0.2in
\bibliography{bibliography}
\bibliographystyle{theapa}

\end{document}